\newcommand{\p}[1]{\mathop{\mbox{\it p} } }
\renewcommand{\vec}[1]{\ensuremath{\boldsymbol{#1}}}
\newcommand{\be}{\begin{equation}}
\newcommand{\ee}{\end{equation}}
\newcommand{\ba}{\begin{array}}
\newcommand{\ea}{\end{array}}
\newcommand{\bea}{\begin{eqnarray}}
\newcommand{\eea}{\end{eqnarray}}
\newcommand{\bean}{\begin{eqnarray*}}
\newcommand{\eean}{\end{eqnarray*}}
\newcommand{\rmh}{^{\rm H}}
\newcommand{\rmt}{^{\rm T}}
\renewcommand{\Re}{\mathcal{R}}
\definecolor{white}{rgb}{1,1,1}
\newtheorem{theorem}{Theorem}
\newtheorem{property}{Property}
\newtheorem{corollary}{Corollary}
\newtheorem{remark}{Remark}
\begin{document}

\title{Beyond Massive-MIMO: The Potential of Positioning with Large Intelligent Surfaces}
\author
{
Sha Hu, Fredrik Rusek, and Ove Edfors \\
Department of Electrical and Information Technology,\\
Lund University, Lund, Sweden\\
$\big($\{firstname.lastname\}@eit.lth.se$\big)$.
\thanks{This paper will be presented in part \cite{HRE172} in IEEE 86th Vehicular Technology Conference (VTC-Fall), Toronto, Canada, 24-27 Sep. 2017.}
}
\maketitle

\begin{abstract}
We consider the potential for positioning with a system where antenna arrays are deployed as a large intelligent surface (LIS), which is a newly proposed concept beyond massive-MIMO where future man-made structures are electronically active with integrated electronics and wireless communication making the entire environment \lq\lq{}intelligent\rq\rq{}. In a first step, we derive Fisher-information and Cram\'{e}r-Rao lower bounds (CRLBs) in closed-form for positioning a terminal located perpendicular to the center of the LIS, whose location we refer to as being on the central perpendicular line (CPL) of the LIS. For a terminal that is not on the CPL, closed-form expressions of the Fisher-information and CRLB seem out of reach, and we alternatively find approximations of them which are shown to be accurate. Under mild conditions, we show that the CRLB for all three Cartesian dimensions ($x$, $y$ and $z$) decreases quadratically in the surface-area of the LIS, except for a terminal exactly on the CPL where the CRLB for the $z$-dimension (distance from the LIS) decreases linearly in the same. In a second step, we analyze the CRLB for positioning when there is an unknown phase $\varphi$ presented in the analog circuits of the LIS. We then show that the CRLBs are dramatically increased for all three dimensions but decrease in the third-order of the surface-area. Moreover, with an infinitely large LIS the CRLB for the $z$-dimension with an unknown $\varphi$ is 6 dB higher than the case without phase uncertainty, and the CRLB for estimating $\varphi$ converges to a constant that is independent of the wavelength $\lambda$. At last, we extensively discuss the impact of centralized and distributed deployments of LIS, and show that a distributed deployment of LIS can enlarge the coverage for terminal-positioning and improve the overall positioning performance. 
\end{abstract}

\begin{IEEEkeywords}
Large intelligent surface (LIS), massive-MIMO, Fisher-information, Cram\'{e}r-Rao lower bound (CRLB), terminal-positioning, central perpendicular line (CPL), arrive-of-angle (AoA), surface-area, phase-uncertainty.
\end{IEEEkeywords}

\section{Introduction}
Wireless communication has evolved from few and geographically distant base stations to more recent concepts involving a high density of access points, possibly with many antenna elements on each. A Large Intelligent Surface (LIS) is a newly proposed concept in wireless communication that is envisioned in \cite{HRE171}, where future man-made structures are electronically active with integrated electronics and wireless communication making the entire environment \lq\lq{}intelligent\rq\rq{}. We foresee a practical implementation of LIS as a compact integration of a vast amount of tiny antenna-elements with reconfigurable processing networks. Antennas on the surface cooperate to transmit and sense signals, both for communication and other types of functionality. Machine learning \cite{ML17} can bring intelligence in the systems both for autonomous operation of the system and for new functionality. One such application is depicted in Fig. \ref{fig1}, where three different terminals are communicating to LIS in an outdoor and indoor scenarios, respectively. 

The LIS concept can be seen as an extension of earlier research in several other fields. One strong relation is to the massive-MIMO concept \cite{M10, MM12, MM14}, where large arrays comprising hundreds of antennas are used to achieve massive gains in spectral and energy efficiencies. However, LIS scales up beyond the traditional antenna array, and implies a clean break with the traditional access-point/base-station concept, as the entire environment is active in the communication. The natural limit of this evolution is that the LISs in an environment act as transmitting and receiving structures, which allows for an unprecedented focusing of energy in the three-dimensional space which enables, besides unprecedented data-rates, wireless charging and remote sensing with extreme precision. This makes it possible to fulfill the most grand visions in 5G communication \cite{AZ14} and Internet of Things \cite{IOT} systems for providing connections to billions of devices.

A concept somewhat similar to what we call LIS seems to be first mentioned in the eWallpaper project at UC Berkeley \cite{ewall}, and in \cite{HRE171} we carry out a first analysis on information-transfer capabilities of the LIS, and show that the number of signal-space dimensions per square-meter ($m^2$) deployed surface-area is $\pi/\lambda^2$, where $\lambda$ is the wavelength, and the capacity that can be harvested per $m^2$ surface-area is linear in the average transmit power, rather than logarithmic as in a traditional massive-MIMO deployment. Following \cite{HRE171}, in this paper we take a first look at the potential of using LIS for terminal-positioning, where we assume that a terminal to be positioned is equipped with a single-antenna and located in a three-dimensional space in front of the LIS. For analytical tractability, although we do not deal with more complicated geometries, our results are fundamental in the sense that positioning of objects in rich scattering environments \cite{JY07, WW15} can be done in two steps: i) estimating the positions of a number of reflecting objects in the environment with line-of-sight (LoS) propagation to the LIS, and ii) backward computation of the position of the object of interest. Thus, the results obtained in this work are instrumental for understanding the accuracy in the first step for positioning with scatters.

In this work, we first derive the Cram\'{e}r-Rao lower bounds (CRLBs) for positioning a terminal on the central perpendicular line (CPL) in closed-form, where the CPL is the line perpendicular to the LIS and crossing the LIS at its center point as shown in Fig. \ref{fig2}. For remaining cases, as closed-form expressions seem out of reach and we approximate the Fisher-information and CRLB in closed-form, which are shown to be accurate under mild conditions. We also show that, the CRLB in general decreases quadratically in the surface-area of the LIS, except for a terminal on the CPL where the CRLB for $z$-dimension decreases just linearly in the same. Meanwhile, the impact of wavelength is $\sim\!\lambda^2$. These scaling laws play in favor of a LIS when compared to other positioning technologies e.g., optical systems \cite{MT11}. A LIS can compensate for its, comparatively, large wavelength by a much larger aperture. 

Besides, we also analyze the CRLB for positioning when there is an unknown phase $\varphi$ presented in the analog circuits of the LIS, in which case the CRLBs for all dimensions are dramatically increased by $\varphi$ and in general decrease in the third-order of the surface-area. Therefore, LIS has significant gains over traditional massive MIMO for positioning as LIS has a much larger surface-area. Furthermore, the CRLB for estimating $\varphi$ is usually significantly large and about $\frac{4\pi^2}{\lambda^2}$ times of the CRLB for the $z$-dimension. Moreover, for an infinitely large LIS, the CRLB for the $z$-dimension with unknown $\varphi$ is 6 dB higher than with known $\varphi$, and the CRLB for estimating $\varphi$ converges to a constant\footnote{Note that, all CRLBs and their limits considered in this paper can be linearly scaled down by the signal-to-noise ratio (SNR) as a natural result.}. 

Then, we also extensively discuss the impact of deployments with a single centralized LIS and multiple distributed smaller LISs constrained to the same total surface-area. We show that, a distributed deployment with splitting the single LIS into 4 small LISs can extend the range of positioning and provide better average CRLB than the centralized deployment under the case that the terminal has a distance to the CPL larger than $\sqrt{6}R$, where $R$ is the radius of the single centralized LIS. Further splitting the 4 small LISs into 16 smaller LISs improves the CRLB, but may also increase the overheads of cooperating among different small LISs.

The rest of the paper is organized as follows. In Sec. II, we describe the signal propagation model for the LIS-terminal link and some common features of Fisher-information computations considered in the paper. In Sec. III, we derive the Fisher-information and CRLB for a terminal on the CPL. In Sec. IV, we discuss a terminal not on the CPL and put forth closed-form approximations of the Fisher-information and CRLB. Further, we also derive the CRLB for angles-of-arrival (AoA) and distance estimations. In Sec. V, we extend the signal model in Sec. II into having a common unknown phase rotation $\varphi$, caused by, e.g., the front-end circuits of the LIS and the terminal. In Sec. VI, we discuss the impacts on the CRLB of different deployments of the LIS. Numerical results are provided in Sec. VII and Sec. VIII summarizes the paper.

\begin{figure}[t]
\begin{center}
\vspace*{-2mm}
\hspace*{-3mm}
\scalebox{0.65}{\includegraphics{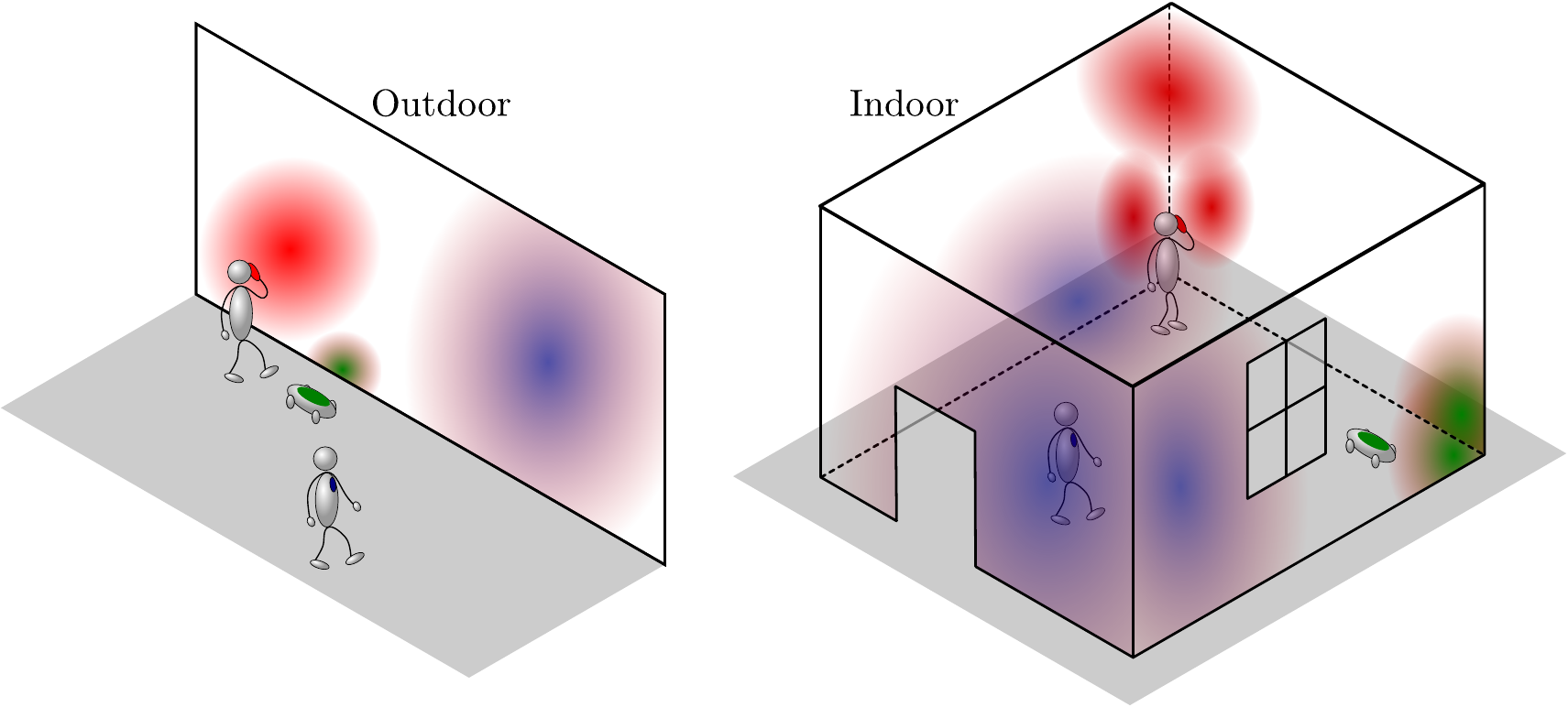}}
\vspace*{-6mm}
\caption{\label{fig1}Three users communicating with an LIS in an outdoor and an indoor scenarios.}
\vspace*{-12mm}
\end{center}
\end{figure}

\subsubsection*{Notation}
Throughout this paper, boldface letters indicate vectors and boldface uppercase letters designate matrices. Superscripts $(\cdot)^{-1}$, $(\cdot)^{1/2}$, $(\cdot)^{\ast}$, $(\cdot)\rmt$ and $(\cdot)\rmh$ stand for the inverse, matrix square root, complex conjugate, transpose, and
Hermitian transpose, respectively. In addition, $\mathcal{R}\{\cdot\}$ takes the real part. Moreover, throughout this paper, the asymptotic notation $f(\tau)\!=\!\mathcal{O}\left(g(\tau)\right)$ means that the limit of $f(\tau)/g(\tau)$ is equal to some constant as the parameter $\tau\!\to\!0$, while $B(\tau)\!=\!o\left(g(\tau)\right)$ means that the limit of $f(\tau)/g(\tau)$ goes to zero as $\tau\!\to\!0$.

\begin{figure}[b]
\begin{center}
\vspace*{-8mm}
\hspace*{0mm}
\scalebox{0.82}{\includegraphics{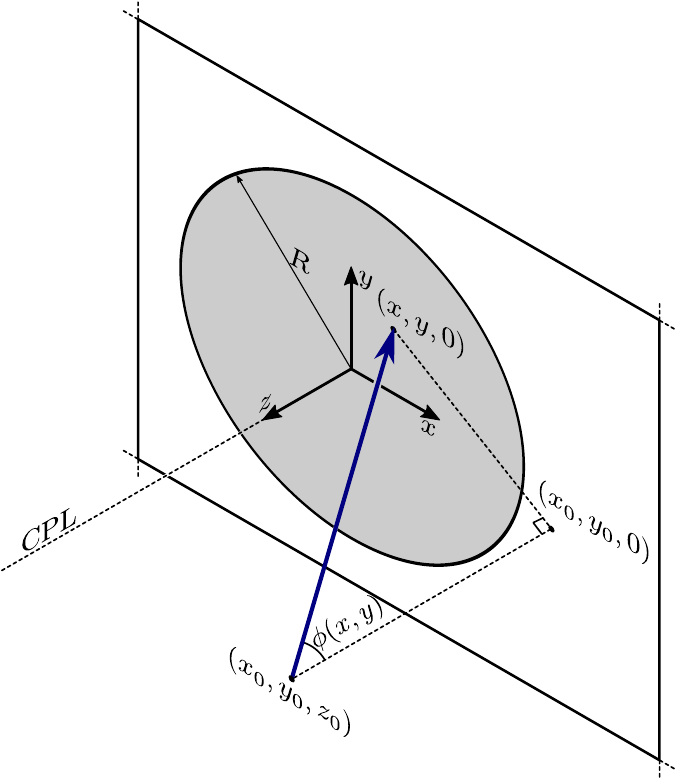}}
\vspace*{-5mm}
\caption{\label{fig2}The radiating model of transmitting signal to the LIS. We integrate the received signal of each point-element over the whole area spanned by the LIS. Therefore, for each point-element on the LIS, the Fraunhofer distance \cite{F10} is infinitely small and the received signal model (\ref{md}) holds for both near-filed and far-field scenarios.}
\vspace*{-14mm}
\end{center}
\end{figure}

\section{Signal Model with LIS}
Expressed in Cartesian coordinates, we assume that the center of the LIS is located at coordinates $x\!=\!y\!=\!z\!=\!0$ and a terminal is located at positive $z$-coordinate. The propagation model of a transmitting terminal at location $(x_0,y_0,z_0)$ to the LIS is depicted in Fig. \ref{fig2}. For analytical tractability, we assume an ideal situation where no scatterers or reflections are present, yielding a perfect LoS propagation scenario, and each terminal is assumed to radiate isotropically. Denoting the wavelength as $\lambda$, and assuming a narrow-band system and ideal free-space propagation from the terminal to all points at the LIS, the received signal at the surface at location $(x,\,y,\,0)$ radiated by a terminal at location $(x_0,\,y_0,\,z_0)$ is
\bea \label{md} \hat{s}_{x_0,\,y_0,\,z_0}(x,y)=s_{x_0,\,y_0,\,z_0}(x,y)+n(x,y), \eea
where $n(x,y)$ is modeled as zero-mean white Gaussian noise with spectral density $N_0$, and the noiseless signal $s_{x_0,\,y_0,\,z_0}(x,y)$ is stated in Property 1.
\begin{property} 
The noiseless signal $s_{x_0,\,y_0,\,z_0}(x,y)$ can be modeled as
\bea \label{md1} s_{x_0,\,y_0,\,z_0}(x,y)=\frac{\sqrt{z_0}}{2\sqrt{\pi}\eta^{\frac{3}{4}}}\exp\!\left(\!-\frac{2\pi j\sqrt{\eta}}{\lambda}\right)\!,   \eea
where the metric
\bea \eta\!=\!z_0^2\!+\!(y\!-\!y_0)^2\!+\!(x\!-\!x_0)^2.\eea 
\end{property} 
\begin{proof}
The noiseless signal received by the LIS at location $(x,y,0)$ and at time epoch $t$ as shown in Fig. \ref{fig2} reads,
\bea \label{nbs1} s_{x_0,\,y_0,\,z_0}(x,y)=\sqrt{P_L\cos\phi(x,y)}s(t) \exp\!\left(\!-2\pi j f_c \Delta_t(x,y)\right), \eea
where $P_L$ denotes the path-loss, $\phi(x,y)$ is AoA of the transmitted baseband signal $s(t)$ at $(x,y,0)$, and $f_c$ is the carrier-frequency. The transmit-time from the terminal to $(x,y,0)$ equals $ \Delta_t(x,y)=\frac{\sqrt{\eta}}{c}$, where $c$ is the speed-of-light. Since we are considering a narrow-band system, the signal $s(t)$ can be assumed to be the same at all locations $(x,y,0)$ of the LIS, hence we can let $s(t)\!=\!1$ and remove it from (\ref{nbs1}). Further, as the free-space path-loss $P_L\!=\!\frac{1}{4\pi\eta}$ and $\cos\phi(x,y)\!=\!\frac{z_0}{\sqrt{\eta}}$, inserting them back into (\ref{nbs1}) yields (\ref{md1}).
\end{proof}

In order to analyze the CRLBs for positioning, we denote the first-order derivatives of the noiseless signal in (\ref{md1}) with respect to variables $x_0$, $y_0$ and $z_0$ as $\Delta s_1$, $\Delta s_2$, and $\Delta s_3$, respectively, which are equal to
{\setlength\arraycolsep{2pt}  \bea \label{dev1} \Delta s_1&=&\frac{\sqrt{z_0}\left(x -x_0\right)}{2\sqrt{\pi}}\!\left(\!
\frac{3}{2}\eta^{-\frac{7}{4}} \!+\! \frac{2\pi j }{\lambda}{\eta}^{-\frac{5}{4}}\!\right)\!\exp\left(\!-\frac{2\pi j\sqrt{\eta}}{\lambda} \right)\!,   \\
\label{dev2} \Delta s_2&=&\frac{\sqrt{z_0}\left(y -y_0\right)}{2\sqrt{\pi}}\left(
\!\frac{3}{2}\eta^{-\frac{7}{4}} \!+\! \frac{2\pi j }{\lambda}{\eta}^{-\frac{5}{4}}\!\right)\!\exp\left(\!-\frac{2\pi j\sqrt{\eta}}{\lambda}\right)\!,    \\
\label{dev3} \Delta s_3&=&\frac{z_0^{\frac{3}{2}}}{2\sqrt{\pi}}\!\left(
\frac{1}{2z_0^2}\eta^{-\frac{3}{4}} \!- \!\frac{3}{2}\eta^{-\frac{7}{4}} \!- \!\frac{2\pi j }{\lambda}{\eta}^{-\frac{5}{4}}\!\right)\!\exp\left(\!-\frac{2\pi j\sqrt{\eta}}{\lambda} \right)\!.   \eea}
\hspace{-1.4mm}From \cite[Chapter 15]{K93}, as $\hat{s}_{x_0,\,y_0,\,z_0}(x,y)$ is Gaussian with mean $s_{x_0,\,y_0,\,z_0}(x,y)$ and variance $N_0$, the elements of the Fisher-information matrix are given by the following double integrals\footnote{The integrals in (\ref{Fisherij}) are due to the additive property of Fisher-information, which can be obtained by sampling the continuous signal $\hat{s}_{x_0,\,y_0,\,z_0}(x,y)$ with Nyquist frequency\cite{K93}, and the band-limited property of $s_{x_0,\,y_0,\,z_0}(x,y)$ can be seen from \cite{HRE171}.}
\bea \label{Fisherij} I_{ij}=\frac{2}{N_0}\iint_{x, y}\Re\!\left\{\Delta s_j \left(\Delta s_i\right)^{\ast}\right\}\mathrm{d}x\mathrm{d}y, \eea
where the integrals are taken over the area of the LIS, which we assume to have a disk-shape\footnote{The assumption of assuming a LIS with a disk-shape is solely for the convenience of derivations. Other shapes such as rectangular, triangular, or ring shapes can be analyzed in similar ways. Moreover, when the terminal is in the far-field, i.e., $R\!\ll\!z_0$, the shape of the LIS is irrelevant and can be regarded as a disk-shape with equal surface-area.} with radius $R$. As CRLB scales down linearly in SNR, we set $N_0\!=\!2$ throughout the paper to eliminate the scaling factor in (\ref{Fisherij}). Further, we define three functions $g_1(n)$, $g_2(n)$ and $g_3(n)$ which are necessary to compute the CRLB based on  (\ref{dev1})-(\ref{Fisherij}),
{\setlength\arraycolsep{2pt}\bea  \label{g1}  g_1(n)&=&\iint_{x^2+y^2\leq R^2}x^2\eta^{-\frac{n}{2}}\mathrm{d}x\mathrm{d}y,  \\
 \label{g2}  g_2(n)&=&\iint_{x^2+y^2\leq R^2}y^2\eta^{-\frac{n}{2}}\mathrm{d}x\mathrm{d}y, \\
\label{g3} g_3(n)&=&\iint_{x^2+y^2\leq R^2}\eta^{-\frac{n}{2}}\mathrm{d}x\mathrm{d}y.  \eea}
\hspace{-1.4mm}In general, closed-form expressions of $g_1(n)$, $g_2(n)$ and $g_3(n)$ are out of reach, except for the case that $x_0\!=\!y_0\!=\!0$, i.e., the terminal is on the CPL, and it holds that
{\setlength\arraycolsep{2pt} \bea \label{g1n} g_1(n)=g_2(n)&=&\frac{\pi}{n^2 - 6n + 8}\left(2z_0^{4 -n}\!- \!{\left(R^2 + z_0^2\right)}^{1 - \frac{n}{2}}\, \left(n R^2 - 2R^2 + 2 z_0^2\right) \right),  \\
\label{g2n} g_3(n)&=&\frac{z_0^{2 -n} - {\left(R^2 + z_0^2\right)}^{1 - \frac{n}{2}}}{n - 2}.  \eea}
\hspace{2.5mm}For a terminal that is not on the CPL, to analyze the properties of CRLB, we will use effective approximations for the Fisher-information and CRLB with the results obtained for the CPL case.

\section{CRLB of a Terminal on the CPL}
In this section, we analyze the CRLB for terminals along the CPL with coordinates $(0, 0, z_0)$. A nice property is that, the CRLB for all dimensions are in closed-form, i.e., the integrals (\ref{Fisherij}) can be efficiently solved using (\ref{g1n}) and (\ref{g2n}). We denote the Fisher-information and CRLB for a terminal with coordinates $(x_0, y_0,z_0)$ and a LIS with radius $R$ as $I_i([x_0, y_0,z_0], R)$ and $C_i([x_0, y_0,z_0], R)$, where the suffix $i=x, y, z$ represents the $x$, $y$, and $z$ dimension, respectively. When suffix $i$ has multiple variables, we mean that all these dimensions contained in $i$ are of the same value. For instance, $I_{x, y}([x_0, y_0,z_0], R)$ denotes the Fisher-information for both $x$ and $y$ dimensions whenever they are equal. 

We denote an useful parameter
\bea \label{tau} \tau=\left(R/z_0\right)^{2},\eea
which measures the relative surface-area normalized by the squared distance of the considered terminal position to the LIS. For a terminal in the far-field (in relation to the radius $R$), the value of $\tau$ is small, and for a terminal close to the LIS, $\tau$ becomes large.

\subsection{CRLB for Three Cartesian Dimensions}
\begin{theorem}
The Fisher-information matrix \vec{I} for a terminal with coordinates $(0,0,z_0)$ is diagonal and the Fisher-information for each Cartesian dimension is
{\setlength\arraycolsep{2pt} \bea  \label{Ixy}I_{x, y}([0,0,z_0], R)=\frac{1}{30z_0^2}f_1(\tau)+\frac{2\pi^2}{3\lambda^2}f_2(\tau),  \\
 \label{Iz}  I_z([0,\,0,\, z_0], R)=\frac{1}{40z_0^2}f_3(\tau)+\frac{2\pi^2}{3\lambda^2}f_4(\tau), \eea}
\hspace{-1.4mm}where the functions $f_1(\tau)$, $f_2(\tau)$, $f_3(\tau)$, and $f_4(\tau)$ obtained with (\ref{g1n})-(\ref{g2n}) are defined as
{\setlength\arraycolsep{2pt} \bea \label{f1} f_1(\tau)&=&1-\frac{1+2.5\tau}{(1+\tau)^{\frac{5}{2}}},   \\
\label{f2} f_2(\tau)&=&1-\frac{1+1.5\tau}{(1+\tau)^{\frac{3}{2}}},   \\
\label{f3} f_3(\tau)&=&13-\frac{13+5\tau^2}{(1+\tau)^{\frac{5}{2}}},   \\
\label{f4} f_4(\tau)&=&1-\frac{1}{(1+\tau)^{\frac{3}{2}}}.\eea}
\hspace{-1.4mm}respectively. Then, the CRLB for each dimension can be computed according to
\bea C_i([0,\,0,\, z_0], R)=I_{i}^{-1}([0,\,0,\, z_0], R), \; i=x, y, z. \eea
\end{theorem}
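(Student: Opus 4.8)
The plan is to specialize the general setup of Section~II to the CPL by putting $x_0=y_0=0$, so that $\eta=z_0^2+x^2+y^2=z_0^2+r^2$ becomes a function of the radial coordinate $r=\sqrt{x^2+y^2}$ alone over the disk $x^2+y^2\le R^2$. This radial symmetry is the structural fact that makes the whole computation tractable: it forces $\vec{I}$ to be diagonal, and it reduces every entry of (\ref{Fisherij}) to the one-dimensional radial integrals already evaluated in closed form in (\ref{g1n})--(\ref{g2n}).

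First I would show that the off-diagonal entries vanish. On the CPL the derivatives (\ref{dev1})--(\ref{dev3}) factor as $\Delta s_1=\frac{\sqrt{z_0}\,x}{2\sqrt{\pi}}W(\eta)e^{-2\pi j\sqrt{\eta}/\lambda}$ and $\Delta s_2=\frac{\sqrt{z_0}\,y}{2\sqrt{\pi}}W(\eta)e^{-2\pi j\sqrt{\eta}/\lambda}$ with the common radial factor $W(\eta)=\frac{3}{2}\eta^{-7/4}+\frac{2\pi j}{\lambda}\eta^{-5/4}$, while $\Delta s_3=\frac{z_0^{3/2}}{2\sqrt{\pi}}V(\eta)e^{-2\pi j\sqrt{\eta}/\lambda}$ carries no angular factor at all. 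Since the shared phase cancels upon conjugation in any product $\Delta s_j(\Delta s_i)^{\ast}$, the integrand $\Re\{\Delta s_2(\Delta s_1)^{\ast}\}$ is proportional to $xy$ times a function of $r$, while $\Re\{\Delta s_3(\Delta s_1)^{\ast}\}$ and $\Re\{\Delta s_3(\Delta s_2)^{\ast}\}$ are proportional to $x$, respectively $y$, times a function of $r$. Passing to polar coordinates, the angular integrals $\int_0^{2\pi}\cos\theta\sin\theta\,\mathrm{d}\theta$, $\int_0^{2\pi}\cos\theta\,\mathrm{d}\theta$, and $\int_0^{2\pi}\sin\theta\,\mathrm{d}\theta$ all vanish, so $I_{xy}=I_{xz}=I_{yz}=0$ and $\vec{I}$ is diagonal.

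Next I would evaluate the diagonal entries. Because $|e^{-2\pi j\sqrt{\eta}/\lambda}|=1$, the phase drops out of each $|\Delta s_i|^2$ and the integrand collapses to a real linear combination of powers $\eta^{-n/2}$, weighted by $x^2$ for $I_x$ and by $y^2$ for $I_y$. For the $x$-dimension this gives $|\Delta s_1|^2=\frac{z_0 x^2}{4\pi}(\frac{9}{4}\eta^{-7/2}+\frac{4\pi^2}{\lambda^2}\eta^{-5/2})$ and hence $I_x=\frac{z_0}{4\pi}(\frac{9}{4}g_1(7)+\frac{4\pi^2}{\lambda^2}g_1(5))$; by the $x\leftrightarrow y$ symmetry of the disk $g_2(n)=g_1(n)$, so $I_y=I_x$, which is precisely why the two share the single expression $I_{x,y}$. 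Expanding $|\Delta s_3|^2$ likewise produces the powers $\eta^{-3/2}$, $\eta^{-5/2}$, and $\eta^{-7/2}$ (the $\eta^{-5/2}$ term receiving contributions both from a real cross term and from the imaginary part), so that $I_z$ becomes a fixed combination of $g_3(3)$, $g_3(5)$, and $g_3(7)$.

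The final, and I expect hardest, step is the algebraic consolidation. After inserting the closed forms (\ref{g1n})--(\ref{g2n}) and writing $R^2=\tau z_0^2$ so that $R^2+z_0^2=z_0^2(1+\tau)$, one must collect like powers of $(1+\tau)$ and recognize the compact shapes $f_1,\dots,f_4$ in (\ref{f1})--(\ref{f4}); the $z$-dimension is the delicate case, since three distinct powers of $(1+\tau)$ have to be merged, after factoring out $(1+\tau)^{-5/2}$, into the single numerator defining $f_3(\tau)$. Once $I_{x,y}$ and $I_z$ are brought to the stated form, the concluding claim is immediate: a diagonal Fisher matrix inverts entrywise, giving $C_i=I_i^{-1}$ for $i=x,y,z$.
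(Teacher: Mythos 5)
Your proposal is correct and follows essentially the same route as the paper's Appendix A: off-diagonal entries vanish by the symmetry of $\eta$ on the CPL (your polar-coordinate angular integrals are the paper's even/odd parity argument in different clothing), the diagonal entries reduce to exactly the combinations $I_{x,y}=\frac{z_0}{4\pi}\bigl(\frac{9}{4}g_1(7)+\frac{4\pi^2}{\lambda^2}g_1(5)\bigr)$ and $I_z=\frac{z_0^3}{4\pi}\bigl(\frac{1}{4z_0^4}g_3(3)+\bigl(\frac{4\pi^2}{\lambda^2}-\frac{3}{2z_0^2}\bigr)g_3(5)+\frac{9}{4}g_3(7)\bigr)$ that the paper obtains, and the final substitution of (\ref{g1n})--(\ref{g2n}) with $R^2=\tau z_0^2$ yields $f_1,\dots,f_4$. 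The only shared omission is the closing algebraic consolidation, which the paper likewise leaves as \lq\lq{}after some manipulations.\rq\rq{}
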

\begin{proof}
See Appendix A.
\end{proof}


From Theorem 1, the following conclusions can be derived. Firstly, when the terminal is close to the LIS, the Fisher-information is infinitely large for all dimensions, and the CRLB $C_i([0,\,0,\, z_0], R)$ becomes 0, while under the case $R\ll z_0$, the CRLB are $\infty$. These observations are consistent with the nature of the problem at hand. 

Secondly, in order to get a direct view of the CRLB in relation to surface-area of the LIS, we assume $\lambda\!\ll\! z_0$ (which in general holds as $\lambda$ is the wavelength). Then, the terms of the Fisher-information comprising $f_1(\tau)$ and $f_3(\tau)$ in (\ref{Ixy}) and (\ref{Iz}) can be omitted, and the CRLBs can be approximated as
{\setlength\arraycolsep{2pt} \bea \label{Cxy}C_{x,y}([0,\,0,\, z_0], R)
&\approx&\frac{3\lambda^2}{2\pi^2f_2(\tau)}, \\
 \label{Cz} C_z([0,\,0,\, z_0], R)
&\approx&\frac{3\lambda^2}{2\pi^2f_4(\tau)}. \eea}
\hspace{-1.4mm}respectively. As it can been seen that, the CRLB for all dimensions are uniquely decided by $\lambda$ and $\tau$. Hence, when $z_0$ is increased by a factor, the radius $R$ of the LIS also has to increase by the same factor in order to have the same CRLBs. Another interesting fact is that, the CRLBs for $x$ and $y$ dimensions are higher than that for $z$-dimension due to 
\bea f_2(\tau)\!<\!f_4(\tau),\eea
which can be seen directly from (\ref{f2}) and (\ref{f4}) using the fact $\tau\!>\!0$. 

Lastly, when the surface radius $R$ is much larger than the distance $z_0$ from the terminal to the LIS, it holds that
\bea  \lim_{\tau\to \infty} f_2(\tau)= \lim_{\tau\to \infty} f_4(\tau)=1.  \eea
Therefore, the asymptotic CRLBs in (\ref{Cxy}) and (\ref{Cz}) are identical and equal to
\bea \label{limtCRLB} \lim_{\tau\to \infty} C_{x, y, z}([0,\,0,\, z_0], R)=\frac{3\lambda^2}{2\pi^2},\eea
for all three dimensions and depend solely on the wavelength $\lambda$, which represents a fundamental lower limit to positioning precision.

In practical scenarios, a more interesting case is $R\!\ll\! z_0$, and then we have the following approximations by using Taylor expansions \cite{R87} at $\tau\!=\!0$, 
{\setlength\arraycolsep{2pt}\bea f_1(\tau)&=&\frac{15}{8}\tau^{2}+o\!\left(\tau^{2}\right),   \\ 
f_2(\tau)&=&\frac{3}{8}\tau^{2}+o\!\left(\tau^{2}\right),   \\ 
f_3(\tau)&=&\frac{65}{2}\tau+o\!\left(\tau\right),   \\ 
f_4(\tau)&=&\frac{3}{2}\tau+o\!\left(\tau\right).\eea}
\hspace{-1.4mm}respectively. From Theorem 1, the CRLBs for different dimensions are equal to
{\setlength\arraycolsep{2pt}\bea \label{area11} C_{x,y}([0,\,0,\, z_0], R)&=&16\tau^{-2}\left(\frac{1}{z_0^2}+\frac{4\pi^2}{\lambda^2 }\right)^{-1}+o\!\left(\tau^{-2}\right),   \\
\label{area22} C_z([0,\,0,\, z_0], R)&=&16\tau^{-1}\left(\frac{13}{z_0^2}+\frac{16\pi^2}{\lambda^2 }\right)^{-1}+o\!\left(\tau^{-1}\right). \eea}
\hspace{-1.4mm}As $\tau$ is proportional to $R^2$, for a terminal on the CPL the CRLBs for $x$ and $y$ dimensions decreases quadratically in the surface-area, while the CRLB for $z$-dimension decreases linearly in the same. Moreover, if we also assume that $\lambda\!\ll\! z_0$ (which usually holds as $\lambda$ is the wavelength), the CRLBs in (\ref{area11}) and (\ref{area22}) can be effectively approximated as
{\setlength\arraycolsep{2pt}\bea \label{area1} C_{x,y}([0,\,0,\, z_0], R)&\approx&\frac{4 \lambda^2}{\pi^2 \tau^2},   \\
\label{area2} C_z([0,\,0,\, z_0], R)&\approx&\frac{\lambda^2}{\pi^2\tau}, \eea}
\hspace{-1.8mm}respectively, which only depend on $\lambda$ and $\tau$. As will be shown in the next section, when the terminal moves away from the CPL, the CRLBs for all three dimensions degrade dramatically compared to (\ref{area1}) and (\ref{area2}), and decreases quadratically in the surface-area.

\section{CRLB of a Terminal not on the CPL}
In this section, we consider a terminal with arbitrary coordinates $(x_0, y_0, z_0)$. When $x_0, y_0\!\neq\!0$, closed-form expressions of the CRLB seem out of reach due to the complicated integrals in (\ref{Fisherij}). Therefore, we seek approximations, tight enough so that insights can still be drawn, of the CRLBs. Using the closed-form expressions of Fisher-information for a terminal on the CPL in Sec. III, the CRLBs for general cases can be well approximated as elaborated next.

\subsection{CRLB Approximations for a Terminal with Coordinates $(x_0, y_0, z_0)$}
We first introduce two mild conditions\footnote{These two conditions are only used to simplify the expressions (\ref{dev1})-(\ref{dev3}). That is, only the terms containing $1/\lambda$ in (\ref{dev1})-(\ref{dev3}) are preserved and the remaining terms are omitted since they are negligible compared to other terms comprising $1/\lambda$, which simplifies the calculations of Fisher-information as shown later in Appendix B.},
 {\setlength\arraycolsep{2pt} \bea \label{cond1}  \lambda&\ll& \frac{z_0^2}{\sqrt{z_0^2+x_0^2+y_0^2+R^2}}, \\
  \label{cond2} 2R&\ll&\frac{z_0^2}{\sqrt{x_0^2+y_0^2}}+\sqrt{x_0^2+y_0^2}.\eea}
\hspace{-1.4mm}As for the cases of interest $R$ is relatively small compared to $z_0$, and $\lambda$ is much smaller than $z_0$, these two conditions are usually satisfied. Letting
\bea z_1=\sqrt{x_0^2 + y_0^2 + z_0^2},\eea
the approximations for Fisher-information and CRLB matrices are stated in Property 2.

\begin{property}
Under the conditions (\ref{cond1})-(\ref{cond2}), the Fisher-information matrix for a terminal with coordinates $(x_0, y_0, z_0)$ can be approximated as
{\setlength\arraycolsep{5pt}  \bea  \label{Imat} \vec{I}=\left[\!\begin{array}{ccc} \alpha+\frac{\beta\, x_0^2}{z_0^2}  & \frac{\beta\, x_0\, y_0}{z_0^2} & \frac{\beta\, x_0}{z_0}\\ \frac{\beta\, x_0\, y_0}{z_0^2} & \alpha+\frac{\beta\, y_0^2}{z_0^2}  & \frac{\beta\, y_0}{z_0}\\ \frac{\beta\, x_0}{z_0} & \frac{\beta\, y_0}{z_0} &\beta \end{array}\right]\!\left(1+o\left(\frac{\lambda}{z_1}\right)\right)\!, \eea}
\hspace{-1.4mm}where $\alpha$ and $\beta$ are equal to
{\setlength\arraycolsep{2pt} \bea \label{alpha} \alpha&=& \frac{z_0}{z_1}I_{x,y}([0, 0, z_1], R),\\
 \label{beta} \beta&=& \left(\frac{z_0}{z_1}\right)^{\!\!3}I_z([0, 0, z_1], R), \eea}
\hspace{-1.4mm}and $I_{x,y}([0, 0, z_1], R)$, $I_z([0, 0, z_1], R)$ are the Fisher-information for $x, y$ and $z$ dimensions for a terminal with coordinates $(0, 0, z_1)$ that are stated in Theorem 1. Then the CRLB matrix reads
{\setlength\arraycolsep{5pt}  \bea  \label{Cmat} \vec{C}=\vec{I}^{-1}\approx\left[\!\begin{array}{ccc} \frac{1}{\alpha} & 0 & -\frac{x_0}{\alpha\, z_0}\\ 0 & \frac{1}{\alpha} & -\frac{y_{0}}{\alpha\, z_0}\\ -\frac{x_{0}}{a\, z_0} & -\frac{y_{0}}{\alpha\, z_0} &\frac{1}{ \beta} + \frac{ x_0^2+y_0^2}{\alpha z_0^2} \end{array}
\right]\!. \eea}
\end{property}
\begin{proof} 
See Appendix B.
\end{proof}

From Property 2, the Fisher-information and CRLB are approximated in closed-form. As a special case, when $x_0\!=\!y_0\!=\!0$, i.e., the terminal is on the CPL, the approximation (\ref{Cmat}) is exact as from Theorem 1. Further, we have the below corollary.
\begin{corollary}
Under the conditions (\ref{cond1})-(\ref{cond2}), the CRLBs for $x$ and $y$ dimensions are approximately equal, and depend on $(x_0, y_0, z_0)$ through $z_0$ and $\sqrt{x_0^2+y_0^2}$. That is, terminals on the circle $x_0^2\!+\!y_0^2\!=\!r^2$ have the same CRLBs for all dimensions for a given distance $z_0$.
\end{corollary}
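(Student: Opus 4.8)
The plan is to obtain the three Cartesian CRLBs by simply reading off the diagonal of the approximate inverse Fisher-information matrix $\vec{C}$ in (\ref{Cmat}), and then to recast their dependence on $(x_0,y_0,z_0)$ as a dependence on the two rotation-invariant quantities $z_0$ and $r=\sqrt{x_0^2+y_0^2}$. Since Property 2 is already established, almost no new computation is required beyond bookkeeping. First I would write down
\be
C_x=\frac{1}{\alpha},\quad C_y=\frac{1}{\alpha},\quad C_z=\frac{1}{\beta}+\frac{x_0^2+y_0^2}{\alpha z_0^2},
\ee
directly from the diagonal entries of (\ref{Cmat}). The identity $C_x=C_y$ is then immediate, which is exactly the first claim of the corollary; the qualifier \emph{approximately} in the statement refers only to the fact that (\ref{Cmat}) is itself an approximation of the true CRLB, within which the $(1,1)$ and $(2,2)$ entries coincide exactly.

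Next I would trace how $\alpha$ and $\beta$ depend on the horizontal coordinates. From their definitions (\ref{alpha})--(\ref{beta}), $\alpha$ and $\beta$ depend on $(x_0,y_0)$ only through the ratio $z_0/z_1$ and through the CPL Fisher-informations $I_{x,y}([0,0,z_1],R)$ and $I_z([0,0,z_1],R)$ evaluated at $z_1=\sqrt{x_0^2+y_0^2+z_0^2}=\sqrt{r^2+z_0^2}$. Because $z_1$ is a function of $r$ and $z_0$ alone, so are $\alpha$ and $\beta$ (for fixed $R$ and $\lambda$). Substituting this into the three CRLB expressions, and noting $x_0^2+y_0^2=r^2$ in $C_z$, shows that each of $C_x$, $C_y$, $C_z$ is a function of $(r,z_0)$ only, which is the second claim.

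The geometric conclusion then follows at once: fixing $z_0$ and restricting the terminal to the circle $x_0^2+y_0^2=r^2$ fixes $r$, hence fixes $z_1$, $\alpha$, and $\beta$, and therefore leaves each of $C_x$, $C_y$, $C_z$ unchanged. I expect no real analytical obstacle in this argument, since the corollary is a structural consequence of the closed form already delivered by Property 2; the only point needing a line of care is verifying that $z_1$ (and thus $\alpha,\beta$) depends on $(x_0,y_0)$ through the combination $x_0^2+y_0^2$ rather than through $x_0$ and $y_0$ separately, which is manifest from the definition of $z_1$. If full self-containment were desired, I would additionally confirm by direct multiplication that the matrix in (\ref{Cmat}) is the inverse of the $\vec{I}$ in (\ref{Imat}), but this verification is routine and is already subsumed in Property 2.
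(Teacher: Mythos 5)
Your proof is correct and takes essentially the same route as the paper, which presents Corollary 1 as an immediate consequence of Property 2: one reads off $C_x=C_y=1/\alpha$ and $C_z=1/\beta+(x_0^2+y_0^2)/(\alpha z_0^2)$ from the diagonal of (\ref{Cmat}), and observes that $\alpha$ and $\beta$ depend on the terminal position only through $z_1=\sqrt{x_0^2+y_0^2+z_0^2}$, so every CRLB is a function of $\left(\sqrt{x_0^2+y_0^2},\,z_0\right)$ alone. Your closing remark about verifying that (\ref{Cmat}) inverts (\ref{Imat}) is a reasonable extra check but, as you say, is already subsumed in Property 2.
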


Applying (\ref{area1})-(\ref{area2}) to approximate $I_{x,y}([0, 0, z_1], R)$ and $I_z([0, 0, z_1], R)$ in  (\ref{alpha})-(\ref{beta}), we have the approximated CRLBs for the non-CPL case stated in Property 3.

\begin{property}
Under the case $R\!\ll \!z_0$ and with conditions in (\ref{cond1})-(\ref{cond2}), the CRLBs for a terminal with coordinates $(x_0, y_0, z_0)$ can be approximated as
{\setlength\arraycolsep{2pt}\bea  \label{app1} C_{x,y}&\approx&\frac{4\lambda^2z_1^5}{\pi^2z_0R^4},   \\
  \label{app2} C_{z}&\approx&\frac{\lambda^2z_0^2}{\pi^2R^2}+\frac{4\lambda^2(x_0^2+y_0^2)z_1^5}{\pi^2z_0^3R^4}. \eea}
\end{property}

Compared to the CPL case, with a small $R$ the CRLB for $z$-dimension is dramatically degraded when the terminal is away from the CPL, that is, $x_0^2\!+\!y_0^2\!>\!0$. Further, when $\sqrt{x_0^2\!+\!y_0^2}\!>\!z_0$, the CRLB for $z$-dimension becomes even larger than the CRLBs for $x$ and $y$ dimensions. Furthermore, the CRLBs decrease quadratically\footnote{This is a consequence of the increasing CRLB for a terminal not on the CPL. As the limits of the CRLB when $R$ is $\infty$ are the same for a terminal at any position with the same $z_0$, the CRLB for a terminal located not the CPL must decrease faster than when it is located on the CPL.} in the surface-area of the LIS for all three dimensions in this case, which is an important motivation to go beyond the massive MIMO deployment to the LIS, which provides significant gains (quadratical in the surface-area) of the CRLB for positioning a terminal.

\subsection{CRLB for AoA and Radius Estimations}
Instead of estimating the coordinates $(x_0, y_0, z_0)$, in some cases is of interest to estimate the AoA and the distance $z_1$, in which case, the spherical coordinates reads
{\setlength\arraycolsep{2pt}\bea \label{spcart} 
x_0&=&z_1\sin\phi\cos\psi,\notag \\
y_0&=&z_1\sin\phi\sin\psi, \notag \\
z_0&=&z_1\cos\phi. \eea}
\hspace{-1.4mm}Define a vector function $\vec{g}(x_0,y_0,z_0)$ such that
\bea [z_1,\phi, \psi]=\vec{g}(x,y,z)=\left[g_1(x_0,y_0,z_0), g_2(x_0,y_0,z_0), g_3(x_0,y_0,z_0)\right]. \eea
where
{\setlength\arraycolsep{2pt} \bea g_1(x_0, y_0, z_0)&=&z_1,  \\
 g_2(x_0, y_0, z_0)&=&\arcsin\left(\frac{x_0}{z_1\cos\psi}\right), \\
 g_3(x_0, y_0, z_0)&=&\arctan\left(\frac{y_0}{x_0}\right).\eea}
\hspace{-1.4mm}Then, the Jacobian matrix $\nabla{\vec{g}}$ with respect to $(x_0, y_0, z_0)$ equals
\bea \nabla{\vec{g}}=\left[\nabla g_1\rmt,\nabla g_2\rmt, \nabla g_3\rmt\right]\rmt \eea 
where
{\setlength\arraycolsep{2pt} \bea \nabla g_1=\frac{\partial g_1}{\partial(x_0, y_0, z_0)}&=&\frac{1}{z_1}\left[x_0, y_0, z_0\right], \notag \\  
\nabla g_2=\frac{\partial g_2}{\partial(x_0, y_0, z_0)}&=&\frac{|\cos\psi|}{z_1^2\sqrt{z_1^2\cos^2\psi-x_0^2}\cos\psi}\left[z_1^2-x_0^2, -x_0y_0,  -x_0z_0\right], \notag \\  
\nabla g_3=\frac{\partial g_3}{\partial(x_0, y_0, z_0)}&=&\frac{1}{x_0^2+y_0^2}\left[-y_0, x_0, 0\right].
 \eea}
\hspace{-1.4mm}From \cite{K93}, the CRLB matrix for estimating $(z_1,\phi, \psi)$ equals
\bea \label{invI} \vec{C}&=&\nabla{\vec{g}}\vec{I}^{-1} (\nabla{\vec{g}})\rmh, \eea
and the CRLB for each parameter can be shown to be, after some manipulations,
{\setlength\arraycolsep{2pt}\bea   C_{z_1}&=&\frac{y_0^2-x_0^2}{z_1^2}C_x+\frac{z_0^2}{z_1^2}C_z,   \\
C_{\phi}&=&\frac{(z_1^4-x_0^4+x_0^2y_0^2)C_x+x_0^2z_0^2C_z}{z_1^4\left(z_1^2\cos^2\psi-x_0^2\right)}, \\
C_{\psi}&=&\frac{1}{x_0^2+y_0^2}C_x, \eea}
\hspace{-1.4mm}where $C_x$, $C_z$ are given in Property 2. Therefore, the CRLBs $C_{z_1}$, $C_{\phi}$, and $C_{\psi}$ in general also decrease quadratically in the surface-area. As a special case, if we consider a terminal on the CPL, that is, $\phi\!=\!\psi\!=\!0$, it holds that
{\setlength\arraycolsep{2pt}\bea  C_{z_1}&=&C_{z},  \\
C_{\phi}&=&\frac{1}{z_1^2}C_{x}.\eea}
\hspace{-1.4mm}However, $\phi\!=\!\psi\!=\!0$ is a singularity point for $C_{\psi}$.

\section{CRLB with Phase Uncertainty in Analog Circuits of the LIS}

In practical scenarios, the front-end circuitry of the LIS and of the terminal is not ideal and presents unknown distortions to the signal model. Using off-line calibration of the LIS \cite{JF17}, the entire LIS can be calibrated up to a common constant which is unknown to the LIS. The terminal has its own distortion, but what comes into play here is the product of the two distortions, which is then a single scalar number. In this paper we will model this distortion as a random phase uncertainty $\varphi$ since amplitude stability is easier to achieve in practice, see \cite{JF17}. Such a presence of the unknown phase $\varphi$ degrades the CRLB of positioning, and in this section we analyze the ensuing CRLB uncertainty thoroughly. To simplify the analysis, we take a special interest for a terminal on the CPL, while for the other positions we use numerical simulations.

With an unknown phase $\varphi$, the noiseless signal in (\ref{md1}) is modified to
\bea \label{mdpn} \tilde{s}_{x_0,\,y_0,\,z_0}(x,y)=\frac{\sqrt{z_0}}{2\sqrt{\pi}\eta^{3/4}}\exp\!\left(\!j\left(\!-\frac{2\pi \sqrt{\eta}}{\lambda}-\varphi\right)\!\right)\!.  \eea
Similarly, we denote the first-order derivatives with respect to variables $x_0$, $y_0$, $z_0$ and $\varphi$ as $\Delta \tilde{s}_1$, $\Delta \tilde{s}_2$, $\Delta \tilde{s}_3$, and $\Delta \tilde{s}_4$, respectively, which are
{\setlength\arraycolsep{2pt}  \bea \label{dev123pn} \Delta \tilde{s}_i&=& \Delta s_i\!\exp\left(-j\varphi\right),\;1\leq i\leq 3,  \\
\label{dev4pn} \Delta \tilde{s}_4&=&-j\tilde{s}_{x_0,\,y_0,\,z_0}(x,y),   \eea}
\hspace{-1.4mm}where $\Delta s_i$ are given in (\ref{dev1})-(\ref{dev3}). As the received signal $\hat{s}_{x_0,\,y_0,\,z_0}(x,y)$ is still Gaussian with mean $\tilde{s}_{x_0,\,y_0,\,z_0}(x,y)$ and variance $N_0$, the elements of Fisher-information matrix are still given by the double integrals in (\ref{Fisherij}). However, compared to the case without $\varphi$, in this case the Fisher-information matrix is 4-dimensional and the CRLBs for all three Cartesian dimensions are degraded. We then state the Fisher-information matrix for the non-CPL case in Theorem 2.

\begin{theorem}
With an unknown phase $\varphi$ considered in (\ref{mdpn}), the Fisher-information matrix equals
{\setlength\arraycolsep{5pt}  \bea  \label{Imatpn} \vec{I}=\left[\!\begin{array}{cc} \vec{I}_0&\vec{i}\rmt\\\vec{i}&I_{44}\end{array}\right]\!, \eea}
\hspace{-1.4mm}where $\vec{I}_0$ is the Fisher-information for $x, y$ and $z$ dimensions for the case with known phase $\varphi$, and the vector \vec{i} comprises the cross-terms of Fisher-information between the $x, y$, $z$ dimensions and the phase $\varphi$, which equals
\bea \vec{i}=\left[\!\begin{array}{ccc}\!I_{14}&\!I_{24}&\!I_{34}\!\!\end{array}\right]\!=\frac{z_0g_3(4)}{\lambda}\left[\!\begin{array}{ccc}\!x_0&\!y_0&\!z_0\!\!\end{array}\right]\!. \eea
Further, the Fisher-information for the unknown $\varphi$ equals
\bea  \label{I44} I_{44}=\frac{z_0}{4\pi}g_3(3),\eea
where $g_3(n)$ is the integral defined in (\ref{g3}).
\end{theorem}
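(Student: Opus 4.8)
The plan is to substitute the phase-corrupted derivatives (\ref{dev123pn})--(\ref{dev4pn}) into the Fisher-information integral (\ref{Fisherij}) with $N_0\!=\!2$ and to organize the resulting $4\times 4$ matrix into the three pieces claimed. The decisive observation is that the three spatial derivatives differ from the phase-free ones only through a common unit-modulus factor, $\Delta\tilde{s}_i\!=\!\Delta s_i\,e^{-j\varphi}$ for $1\le i\le 3$. First I would dispose of the upper-left $3\times 3$ block: for $i,j\in\{1,2,3\}$ the product $\Delta\tilde{s}_j(\Delta\tilde{s}_i)^{\ast}$ carries $e^{-j\varphi}e^{+j\varphi}\!=\!1$, so its real part coincides with the phase-free integrand. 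Hence this block is exactly the known-phase matrix $\vec{I}_0$ of Property 2, with no fresh computation required.

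Next I would evaluate $I_{44}$, which is the cleanest entry. Since $\Delta\tilde{s}_4\!=\!-j\,\tilde{s}_{x_0,y_0,z_0}$, we have $|\Delta\tilde{s}_4|^2\!=\!|\tilde{s}_{x_0,y_0,z_0}|^2\!=\!|s_{x_0,y_0,z_0}|^2\!=\!\frac{z_0}{4\pi}\eta^{-3/2}$ straight from (\ref{md1}). Integrating over the disk and matching the definition (\ref{g3}) gives $I_{44}\!=\!\frac{z_0}{4\pi}g_3(3)$ exactly, invoking no approximation whatsoever.

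For the cross terms $I_{i4}$, $1\le i\le 3$, the phase factors again collapse: $\Delta\tilde{s}_4(\Delta\tilde{s}_i)^{\ast}\!=\!-j\,s_{x_0,y_0,z_0}(\Delta s_i)^{\ast}$, so that $\Re\{-j\,s_{x_0,y_0,z_0}(\Delta s_i)^{\ast}\}\!=\!\Im\{s_{x_0,y_0,z_0}(\Delta s_i)^{\ast}\}$. Writing the noiseless signal as amplitude times phase, this imaginary part equals $-|s_{x_0,y_0,z_0}|^2$ times the phase gradient $\partial(-2\pi\sqrt{\eta}/\lambda)/\partial p_i$, which automatically isolates precisely the $1/\lambda$ contributions that conditions (\ref{cond1})--(\ref{cond2}) instruct us to retain. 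Differentiating $\sqrt{\eta}$ produces integrands proportional to $(x-x_0)\eta^{-2}$, $(y-y_0)\eta^{-2}$, and $z_0\,\eta^{-2}$. The $z_0$-entry reproduces $g_3(4)$ at once; for the first two I would split $\iint(x-x_0)\eta^{-2}\!=\!\iint x\,\eta^{-2}-x_0\iint\eta^{-2}$ and argue that the centered moment $\iint x\,\eta^{-2}$ is negligible against $x_0\,g_3(4)$, leaving entries aligned with $x_0 g_3(4)$ and $y_0 g_3(4)$. Collecting the three entries then delivers $\vec{i}\!=\!\frac{z_0 g_3(4)}{\lambda}[\,x_0,\,y_0,\,z_0\,]$ and assembles the block form (\ref{Imatpn}).

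The \emph{main obstacle} is this last reduction: off the CPL neither $g_3(n)$ nor the moment integrals admit closed forms, so the replacement $\iint(x-x_0)\eta^{-2}\!\approx\!-x_0\,g_3(4)$ must be justified rather than asserted. I would establish it by a first-order expansion of $\eta^{-2}$ about the CPL value $\eta_0\!=\!z_0^2+x^2+y^2$: the leading contribution to $\iint x\,\eta^{-2}$ vanishes by the odd symmetry of the disk, and the surviving correction is of relative order $\tau\!=\!(R/z_0)^2$, so under $R\!\ll\!z_0$ it is dominated by $x_0\,g_3(4)$ and is absorbed into the $1\!+\!o(\lambda/z_1)$ factor of (\ref{Imatpn}). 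The same symmetry argument simultaneously explains why the off-diagonal entries line up with the direction $[\,x_0,\,y_0,\,z_0\,]$ from the center of the LIS toward the terminal.
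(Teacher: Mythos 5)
Your decomposition is the same as the paper's Appendix C: the factor $e^{-j\varphi}$ cancels in every product $\Delta\tilde{s}_j(\Delta\tilde{s}_i)^{\ast}$, so the upper-left $3\times 3$ block is the known-phase Fisher matrix (note it is the \emph{exact} $\vec{I}_0$, not its Property-2 approximation); $I_{44}=\frac{z_0}{4\pi}g_3(3)$ is exact; and the cross terms reduce to integrals of $(x-x_0)\eta^{-2}$, $(y-y_0)\eta^{-2}$ and $z_0\eta^{-2}$. Your observation that the amplitude derivative drops out of $\Re\{-js(\Delta s_i)^{\ast}\}$ \emph{exactly} -- so that only the phase-gradient ($1/\lambda$) terms survive without any appeal to (\ref{cond1}) -- is in fact tidier than the paper's bare reference to ``similar arguments as in the proof of Property 2.''

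The gap is in your justification of $\iint x\,\eta^{-2}\,\mathrm{d}x\mathrm{d}y\approx 0$. You expand $\eta^{-2}$ about $\eta_0=z_0^2+x^2+y^2$, so your perturbation is $\eta-\eta_0=(x_0^2+y_0^2)-2(xx_0+yy_0)$, which is \emph{not} small compared with $\eta_0$ unless $x_0^2+y_0^2\ll z_0^2$. But Theorem 2 targets arbitrary off-CPL terminals -- the paper explicitly works with $\sqrt{x_0^2+y_0^2}>z_0$ (after Property 3, and numerically with $x_0=y_0=z_0=8$) -- and in that regime your Taylor expansion is invalid, so the claimed ``relative order $\tau$'' remainder does not follow from your argument. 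The repair is precisely the paper's step (\ref{eta1}) in Appendix B: complete the square and expand about $z_1^2+x^2+y^2$ with $z_1^2=x_0^2+y_0^2+z_0^2$, so that the perturbation is only the cross term $-2(xx_0+yy_0)$, whose smallness relative to $z_1^2$ is exactly condition (\ref{cond2}); odd symmetry then kills the leading term of $\iint x\,\eta^{-2}\,\mathrm{d}x\mathrm{d}y$, and the first correction, $4x_0\iint x^2(z_1^2+x^2+y^2)^{-3}\,\mathrm{d}x\mathrm{d}y$, is of relative order $R^2/z_1^2\leq\tau$ against $x_0\,g_3(4)$. One further point: carry your own constants to the end. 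With phase $\theta=-2\pi\sqrt{\eta}/\lambda$ you get $-|s|^2\,\partial\theta/\partial z_0=\frac{z_0^2}{2\lambda}\eta^{-2}$, hence $\vec{i}=\frac{z_0 g_3(4)}{2\lambda}\,[x_0,\ y_0,\ z_0]$, which matches Appendix C, (\ref{appD1})--(\ref{appD3}), but differs by a factor of $2$ from the theorem as printed; the paper is internally inconsistent here, and your proposal silently adopts the theorem's constant rather than the one your derivation actually produces, so you should flag the discrepancy instead of ``collecting'' to the stated result.
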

\begin{proof}
See Appendix C.
\end{proof}
From Theorem 2, if we know the CRLB matrix $\vec{C}_0\!=\!\left( \vec{I}_0\right)^{-1}$ for $x$, $y$ and $z$ dimensions for the case with known $\varphi$, the CRLB matrix with $\varphi$ can be computed as
 {\setlength\arraycolsep{5pt}\bea  \label{Cmatpn} \vec{C}=\frac{1}{I_{44}-\vec{i}\vec{C}_0\vec{i}\rmt}\left[\!\begin{array}{cc} \vec{C}_0\left(I_{44}-\vec{i}\vec{C}_0\vec{i}\rmt\right)+\vec{C}_0\vec{i}\rmt\vec{i}\vec{C}_0&-\vec{C}_0\vec{i}\rmt \\ -\vec{C}_0\vec{i}&1\end{array}\right]\!. \eea}
\hspace{-1.4mm}As can be seen from (\ref{Cmatpn}), the CRLB for estimating $\varphi$ equals
\bea \label{Cwpn0} C_{\varphi}=\frac{1}{I_{44}-\vec{i}\vec{C}_0\vec{i}\rmt},\eea
and the CRLB matrix for the three Cartesian dimensions becomes
\bea \label{C0pn}  \tilde{\vec{C}_0}=\vec{C}_0+\vec{C}_0\vec{i}\rmt\vec{i}\vec{C}_0C_{\varphi}.\eea
Hence, from (\ref{C0pn}) the CRLBs are dramatically degraded due to the presence of $\varphi$ for the three Cartesian dimensions with the additional term $\vec{C}_0\vec{i}\rmt\vec{i}\vec{C}_0C_{\varphi}$. However, as $\varphi$ plays no role in the Fisher-information matrix in (\ref{Imatpn}), we have the corollary below.
\begin{corollary}
The Fisher-information and CRLB for all three Cartesian dimensions and the phase $\varphi$ are independent of the true value of $\varphi$.
\end{corollary}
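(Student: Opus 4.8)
The plan is to prove that the entire Fisher-information matrix $\vec{I}$ in (\ref{Imatpn}) does not depend on the true value of $\varphi$; the claim for the CRLB then follows immediately, since $\vec{C}=\vec{I}^{-1}$ and matrix inversion introduces no new $\varphi$-dependence. The one structural fact driving everything is that every first-order derivative of the signal carries exactly one common phase factor $\exp(-j\varphi)$, so that $\varphi$ cancels identically in each Fisher-information integrand.

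First I would record the factored form of the four derivatives. From (\ref{dev123pn}) we have $\Delta\tilde{s}_i=\Delta s_i\exp(-j\varphi)$ for $1\le i\le 3$, where each $\Delta s_i$ in (\ref{dev1})--(\ref{dev3}) is $\varphi$-free. For the phase coordinate, (\ref{dev4pn}) gives $\Delta\tilde{s}_4=-j\tilde{s}_{x_0,y_0,z_0}(x,y)$, and combining (\ref{md1}) with (\ref{mdpn}) shows $\tilde{s}_{x_0,y_0,z_0}(x,y)=s_{x_0,y_0,z_0}(x,y)\exp(-j\varphi)$; hence $\Delta\tilde{s}_4=\big(-j\,s_{x_0,y_0,z_0}(x,y)\big)\exp(-j\varphi)$ also splits as a $\varphi$-free quantity times the same single factor $\exp(-j\varphi)$. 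The only point worth stating explicitly here is why differentiating in $\varphi$ does not generate an extra phase factor: it merely brings down a $-j$ and leaves $\exp(-j\varphi)$ intact, so the cancellation argument stays uniform across all four coordinates.

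Next I would substitute these forms into the generic Fisher-information integrand $\Re\!\left\{\Delta\tilde{s}_j(\Delta\tilde{s}_i)^{\ast}\right\}$ from (\ref{Fisherij}). Writing $\Delta\tilde{s}_i=u_i\exp(-j\varphi)$ with each $u_i$ independent of $\varphi$, the bilinear product becomes $\Delta\tilde{s}_j(\Delta\tilde{s}_i)^{\ast}=u_j\exp(-j\varphi)\,u_i^{\ast}\exp(+j\varphi)=u_j u_i^{\ast}$, so the two conjugate phase factors annihilate for every pair $(i,j)$ — including the block $\vec{I}_0$, the cross-terms $\vec{i}$ between the Cartesian dimensions and $\varphi$, and the diagonal entry $I_{44}$. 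Consequently each $I_{ij}$ is $\varphi$-free, which is precisely what is already manifest in the closed-form expressions for $\vec{i}$ and $I_{44}$ in Theorem 2, none of which contain $\varphi$. Since $\vec{I}$ is independent of $\varphi$, so is $\vec{C}=\vec{I}^{-1}$ wherever it exists, establishing the statement for both the Fisher-information and the CRLB of all three Cartesian dimensions and of $\varphi$ itself. I do not anticipate any genuine obstacle: the result is essentially a symmetry observation, and the only care needed is to verify the uniform $\exp(-j\varphi)$ factorization across all four derivatives, after which the cancellation is automatic.
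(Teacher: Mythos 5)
Your proof is correct and follows essentially the same route as the paper: the paper's Theorem 2 (Appendix C) notes that $\varphi$ appears only in the common exponential factor of the derivatives, so it cancels in every product $\Delta\tilde{s}_j(\Delta\tilde{s}_i)^{\ast}$, and the corollary is then read off from the resulting $\varphi$-free Fisher-information matrix, exactly as you argue. Your write-up merely makes the cancellation more explicit (including the factorization of $\Delta\tilde{s}_4$), which is a fine presentational choice but not a different method.
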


Since in general we cannot get $g_3(n)$ in closed-from, we start with analyzing the Fisher-information for a terminal on the CPL, which from Theorem 2 equals
{\setlength\arraycolsep{5pt} \bea  \label{Imatpn1} \vec{I}=\left[\!\begin{array}{cccc} I_{11}  &0 &0&0\\0&I_{22} &0 & 0\\ 0& 0 &I_{33}&I_{34}\\ 0&0&I_{34}&I_{44}\end{array}\right]\!\!. \eea}
\hspace{-1.4mm}Hence, the CRLBs for $x$ and $y$ dimensions remain the same with the unknown $\varphi$, and the CRLBs for $z$-dimension and phase $\varphi$ are equal to
{\setlength\arraycolsep{2pt}\bea \label{Czpn} C_z&=&\frac{I_{44}}{I_{33}I_{44}-I_{34}^2},  \\ 
\label{Cwpn} C_{\varphi}&=&\frac{I_{33}}{I_{33}I_{44}-I_{34}^2}.\eea}
\hspace{-1.4mm}On the CPL, we can reach expressions for $I_{34}$ and $I_{44}$ in closed-from, and with $I_{ii}$ ($1\!\leq\!i\!\leq\!3$) computed in Theorem 1, the CRLB for all dimensions are stated in the below property.
\begin{property}
With an unknown phase $\varphi$, for a terminal on the CPL the CRLBs for $x$ and $y$ dimensions remain the same as with known $\varphi$, while the CRLBs for $z$-dimension and phase $\varphi$ are equal to
 {\setlength\arraycolsep{2pt}\bea\label{Czpn1}   C_z&=&
\left(\frac{1}{10z_0^2}f_5(\tau)+\frac{\pi^2}{6\lambda^2}f_6(\tau)\right)^{\!\!-1}, \\
\label{Cwpn1} C_\varphi&=&\left(\frac{1}{2}f_7(\tau)+\left(\frac{\lambda^2}{10\pi^2z_0^2f_8(\tau)}+\frac{8}{3f_9(\tau)}\right)^{-1}\right)^{\!\!-1},\eea}
 \hspace{-1.4mm}where the functions $f_5(\tau)$, $f_6(\tau)$, $f_7(\tau)$, $f_8(\tau)$ and $f_9(\tau)$ obtained with (\ref{g1n})-(\ref{g2n}) are defined as
{\setlength\arraycolsep{2pt}\bea\label{f5}  f_5(\tau)&=&1-\frac{1+1.25\tau^2}{(1+\tau)^{\frac{5}{2}}},\\
\label{f6}  f_6(\tau)&=&1-\frac{4-3\sqrt{1+\tau}+3\tau}{(1+\tau)^{\frac{3}{2}}}, \\
\label{f7}  f_7(\tau)&=&1-\frac{1}{\sqrt{1+\tau}},\\
\label{f8}  f_8(\tau)&=&\frac{\tau^2\sqrt{1+\tau}}{4+5\tau^2-4(1+\tau)^{\frac{5}{2}}}, \\
\label{f9}  f_9(\tau)&=&\frac{\tau^2}{\sqrt{1+\tau}-(1+\tau)^2}.\eea}
\end{property}
\begin{proof}
See Appendix D.
\end{proof}

Using Property 4, when $\tau\!\to\!\infty$ it holds that
\bea  \lim\limits_{\tau\to\infty}f_5(\tau)=\lim\limits_{\tau\to\infty}f_6(\tau)=1, \eea
and the CRLB limit for $z$-dimension is
\bea \label{Czpnlimit1} \lim\limits_{\tau\to\infty} C_z=\frac{6\lambda^2}{\pi^2}, \eea
which is 4 times of the CRLB for $z$-dimension with known $\varphi$, hence, the unknown phase causes 6 dB degradation of the positioning precisions for $z$-dimension for a terminal on the CPL. Further, as it also holds that
{\setlength\arraycolsep{2pt}\bea  \lim\limits_{\tau\to\infty}f_7(\tau)&=&1,  \\
 \lim\limits_{\tau\to\infty}f_8(\tau)&=&-\frac{1}{4}, \\
  \lim\limits_{\tau\to\infty}f_9(\tau)&=&-1, \eea}
\hspace{-1.4mm}the CRLB limit for phase $\varphi$ equals
\bea  \label{Czpnlimit2} \lim\limits_{\tau\to\infty} C_\varphi=\left(\frac{1}{2}-\left(\frac{8}{3}+\frac{\lambda^2}{10\pi^2z_0^2}\right)^{-1}\right)^{\!\!-1}, \eea
which becomes a constant when $\lambda\!\ll\!z_0$,
\bea  \label{Czpnlimit2} \lim\limits_{\tau\to\infty} C_\varphi=8. \eea
Therefore, in order to estimate $\varphi$, the SNR should be extremely high regardless of the wavelength $\lambda$ and surface-area of the LIS.

To see the trends at small $\tau$, we also use Taylor expansions at $\tau\!=\!0$ which results in
{\setlength\arraycolsep{2pt}\bea\label{f51}  f_5(\tau)&=&\frac{5}{2}\tau+o\left(\tau\right),\\
\label{f61}  f_6(\tau)&=&\frac{1}{8}\tau^3+o\left(\tau^3\right), \\
\label{f71}  f_7(\tau)&=&\frac{1}{2}\tau+o(\tau),\\
\label{f81}  f_8(\tau)&=&-\frac{1}{10}\tau+o(\tau), \\
\label{f91}  f_9(\tau)&=&-\frac{2}{3}\tau+o(\tau).\eea}
\hspace{-1.4mm}From Property 4 and using (\ref{f51})-(\ref{f91}), when $\tau$ is sufficiently small we have the approximations
  {\setlength\arraycolsep{2pt}\bea\label{Czpn2}   C_z&\approx&\frac{48\lambda^2}{\pi^2\tau^3}\left(1+\frac{12\lambda^2}{\pi^2z_0^2\tau^2}\right)^{-1}, \\
\label{Cwpn2} C_\varphi&\approx&\frac{4}{\tau\lambda^2}\left(\lambda^2+4\pi^2z_0^2\right).\eea}
\hspace{-1.4mm}An interesting fact is that, unlike the case with known $\varphi$ where the CRLB for $z$-dimension decreases linearly in the surface-area, in the presence of an unknown $\varphi$ the slope of the CRLB for $z$-dimension in relation to the surface-area (both are in logarithmic domain) varies between 1 and 3. This can be seen from (\ref{Czpn2}) as we have the two cases:
 \begin{itemize}
 \item When $\frac{2\sqrt{3}\lambda}{\pi z_0}\!\ll\!\tau\!\ll\!1$, it holds that
\bea\label{Czpn3}   C_z\approx\frac{48\lambda^2}{\pi^2\tau^3},\eea
which decreases in the third-order of the surface-area of the LIS.
 \item When $0\!<\!\tau\!\ll\!\frac{2\sqrt{3}\lambda}{\pi z_0}$, it holds that
\bea\label{Czpn4}   C_z\approx\frac{4z_0^2}{\tau},\eea
which decreases linearly in the surface-area of the LIS.
 \end{itemize}
 
\begin{remark}
Note that, the CRLB for $z$-dimension in (\ref{Czpn4}) is independent of $\lambda$, which is different from the CRLB with known phase as in (\ref{area2}). Therefore, with phase uncertainty, decreasing the wavelength is not beneficial for improving the CRLB for estimating the distance $z_0$.
\end{remark}
 
Moreover, when $\tau$ is sufficiently small and $\lambda\!\ll\!z_0$ holds, the CRLB for phase $\varphi$ is significantly larger than that for $z$-dimension since
\bea\label{CzCwpn}   \frac{C_\varphi}{C_z}\approx\frac{4\pi^2}{\lambda^2}.\eea

In Fig. \ref{f1f2f3}, we depict the CRLB for all three Cartesian dimensions with and without $\varphi$, derived in Theorem 1 and Property 4, respectively, and we let $z_0\!=\!4$ m and $\lambda\!=\!0.1$ m. Assuming that the distance between two adjacent antenna elements in the surface-deployment is half of $\lambda$, the number of antenna-elements deployed in the surface is then equal to 
\bea N\!=\!\frac{4\pi R^2}{\lambda^2}\!=\!\frac{4\pi \tau z_0^2}{\lambda^2}\!=\!2\tau\!\times\!10^{4}.\eea 
A typical massive-MIMO array comprising $N\!=\!200$ antennas results in $\tau\!=\!0.01$. We see that massive-MIMO for positioning falls just short of reaching the cubic slope, whereas LIS that increases the surface-area 10-20 fold reaches the cubic slope and yields significant gains.

In Fig. \ref{z_omega_approx}, we depict the CRLBs for $z$-dimension and phase $\varphi$. As can be seen, the approximations in (\ref{Czpn2})-(\ref{Cwpn2}) are well aligned with the exact forms obtained in Property 4 when $\tau\!<\!0.02$. Moreover, in this case the CRLBs for estimating $\varphi$ is around $4\pi^2/\lambda^2\!=\!4000$ times of the CRLB for $z$-dimension which is shown in (\ref{CzCwpn}).

 \begin{figure}[t]
\begin{center}
\vspace*{-6mm}
\hspace*{-4mm}
\scalebox{0.42}{\includegraphics{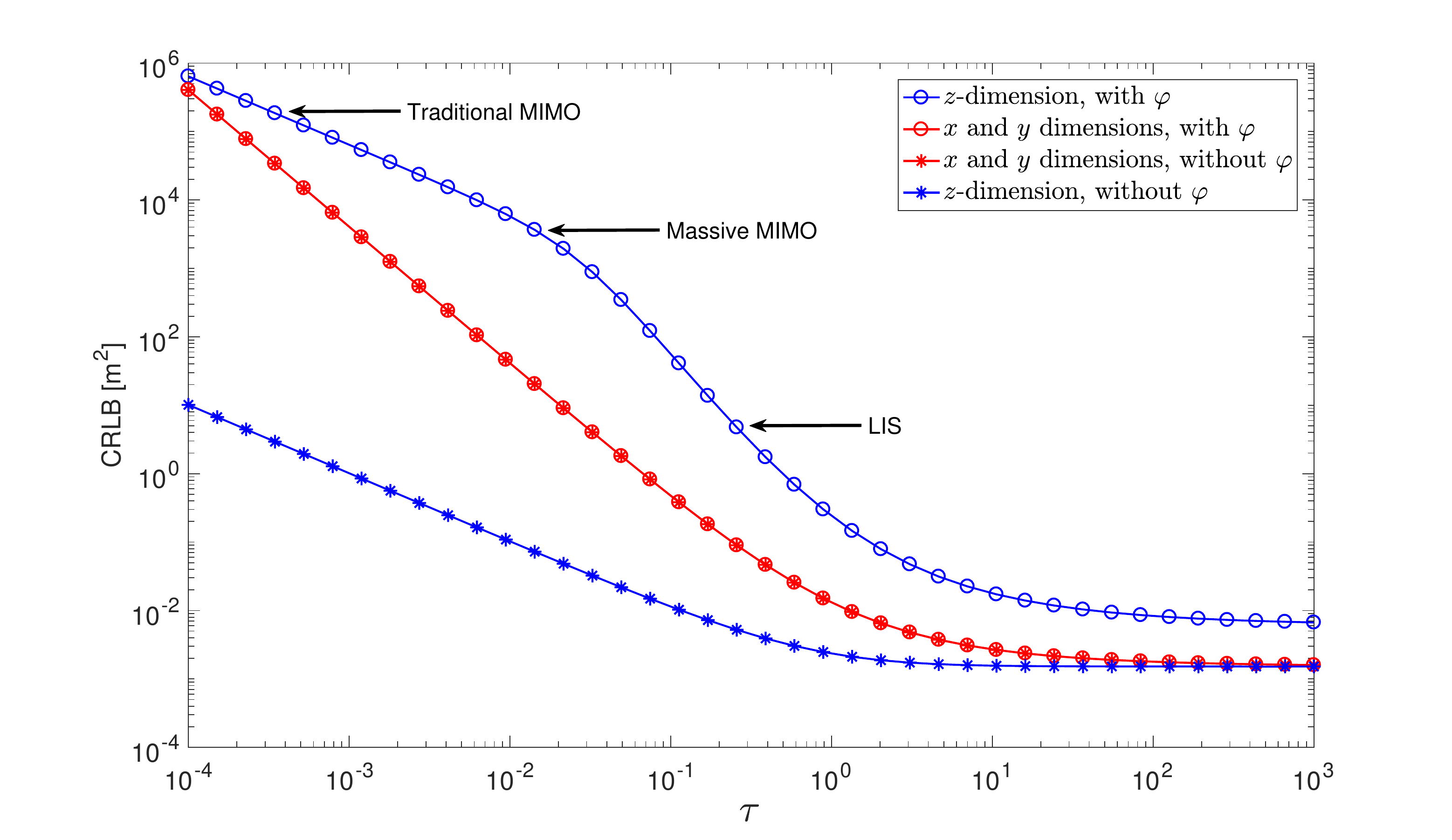}}
\vspace*{-8mm}
\caption{\label{f1f2f3}The exact CRLB for $x$, $y$ and $z$ dimensions for terminals along the CPL. As can be seen, with LIS the CRLB is the the fast-decreasing region compared to the massive-MIMO, which shows the potential gains of the LIS.}
\vspace*{-6mm}
\end{center}
\end{figure}

 \begin{figure}
\begin{center}
\vspace*{-4mm}
\hspace*{-4mm}
\scalebox{0.42}{\includegraphics{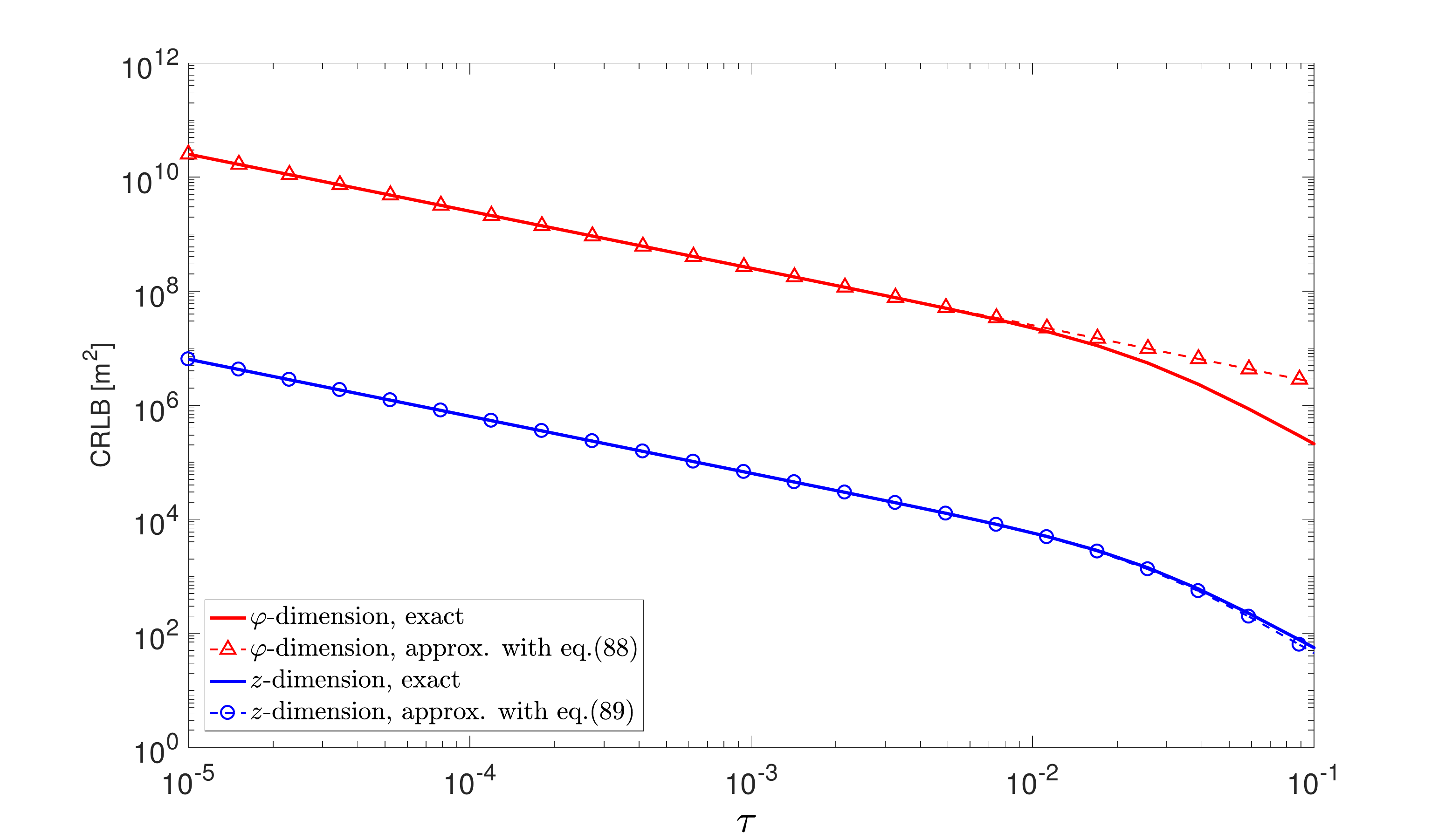}}
\vspace*{-8mm}
\caption{\label{z_omega_approx}The exact and approximated CRLB for $z$-dimension and phase $\varphi$ for terminals on the CPL, which are well aligned for small values of $\tau$.}
\vspace*{-12mm}
\end{center}
\end{figure}

Following the similar discussion in Sec. IV and utilizing the approximations in Property 2 and (\ref{Cwpn0})-(\ref{C0pn}), the CRLBs for a terminal not on the CPL can also be approximated. However, the derivations are relatively long and the conclusions are similar as those drawn for the case with a terminal on the CPL. Loosely speaking, the Fisher-information terms comprised in vector $\vec{i}$ increases linearly in the surface-area and $\vec{C}_0$ decreases quadratically in the surface-area for all dimensions as explained in Sec. IV. Further, as $\vec{C}_{\varphi}$ decreases linearly in the surface-area, the CRLBs for $x$, $y$ and $z$ dimensions then decrease in the third-order of the surface-area from (\ref{C0pn}), which is the same as for the CPL case. Furthermore, as $\tau$ grows large, the limits of the CRLBs for $x$ and $y$ dimensions remain the same for the case with known $\varphi$ since all positions can be approximated as on the CPL in the far-field, while for $z$-dimension the limit of CRLB is 6 dB higher than that with known $\varphi$ as shown in (\ref{Czpnlimit1}).

\section{Deployment of the LIS}
In this section we consider different deployments of the LIS on a large surface with size $W\!\times\! H$ where $W, H$ are the width and length, respectively. In particular, we consider the centralized-deployment (a) and distributed-deployments (b), (c) as depicted in Fig. \ref{fig4}. For simplicity, we assume $R,\,\lambda\!\ll\! z_0$ and consider the CRLBs for a terminal on the CPL with coordinates $(0, 0, z_0)$ without phase-uncertainty in the received signal, that is, positioning a terminal in the far-field.

For the centralized deployment (a), the CRLBs for all three dimensions are given in (\ref{area1}) and (\ref{area2}). With a distributed deployment (b), the LIS is split into four small LISs centered at $(\pm W/4, \pm H/4)$, each with radius $R/2$. Using Property 2, the symmetry of the LIS, and the approximations in (\ref{area1})-(\ref{area2}), the sum of the Fisher-information matrices corresponding to the four small LISs can be shown to be diagonal, and the Fisher-information for the $x, y$ and $z$ dimensions are equal to
{\setlength\arraycolsep{2pt} \bea I_{x,y}&\approx&\frac{\pi^2 z_0R^4}{16\lambda^2( z_0^2+D^2)^{5/2}}+\frac{\pi^2D^2 z_0R^2}{2\lambda^2( z_0^2+D^2)^{5/2}},  \\
I_z&\approx&\frac{\pi^2R^2 z_0^3}{\lambda^2( z_0^2+D^2)^{5/2}},  \eea}
\hspace{-1.4mm}respectively, where $D$ equals
\bea D\!=\!\frac{\sqrt{W^2+H^2}}{4}.\eea 
Assuming $D\!\ll \!z_0$, the Fisher-information can further be approximated as
 {\setlength\arraycolsep{2pt}    \bea  \label{Ixy4} I_{x,y}&\approx&\frac{\pi^2R^4}{4\lambda^2 z_0^4}\left(\frac{1}{4}+\frac{2D^2}{R^2}\right),  \\
 \label{Iz4} I_{z}&\approx&\frac{\pi^2R^2}{\lambda^2 z_0^2}. \eea}
\hspace{-1.4mm}Comparing (\ref{area1}) to (\ref{Ixy4}), it can be seen that the CRLB for $x$ and $y$ dimensions with the distributed deployment (b) is lower than that with the centralized deployment (a) only if 
\bea \frac{1}{4}+\frac{2D^2}{R^2}>1,\eea
 or equivalently, 
 \bea \label{thresh2}  \sqrt{W^2+H^2}>\sqrt{6}R. \eea
 
 \begin{figure*}[t]
\begin{center}
\vspace*{-6mm}
\hspace*{-2mm}
\scalebox{0.75}{\includegraphics{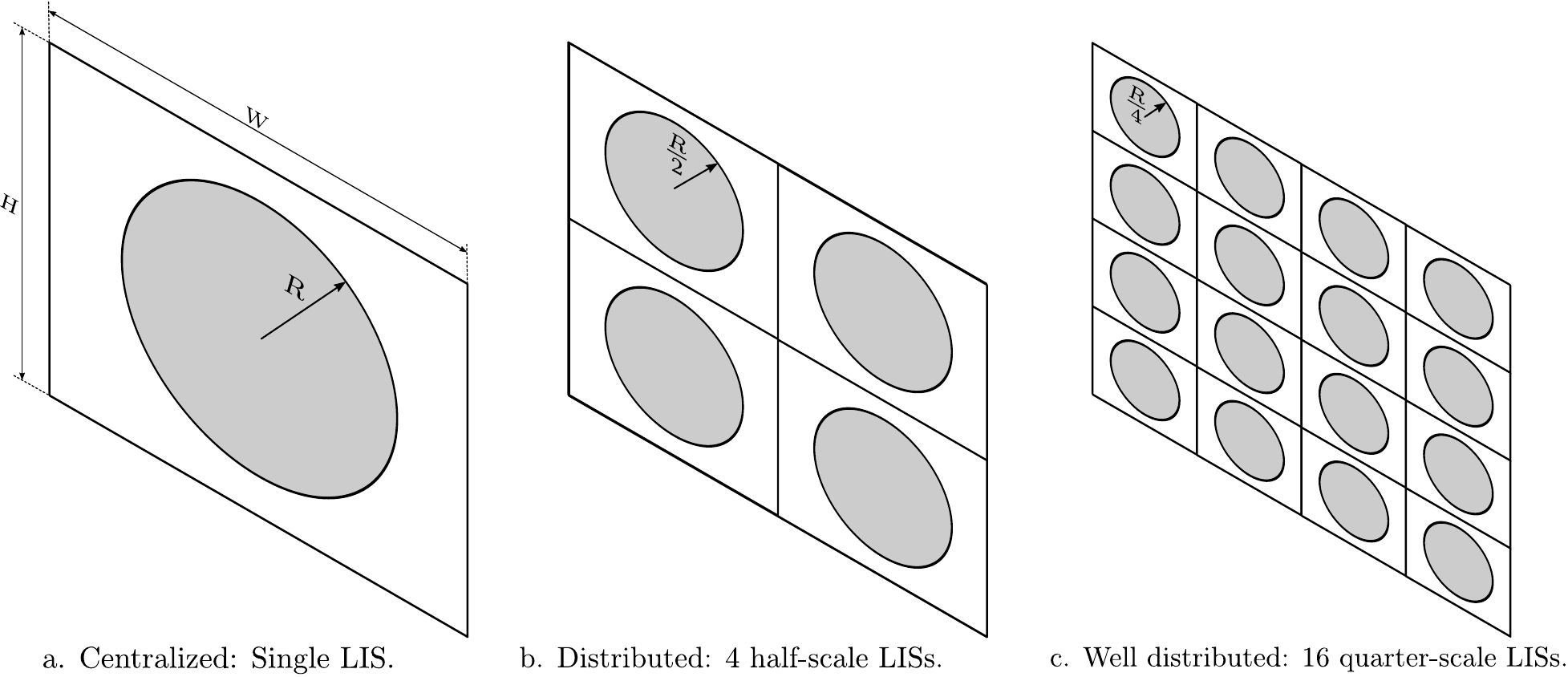}}
\vspace*{-6mm}
\caption{\label{fig4}Different deployments of the LIS in a surface with width $W$ and length $H$. Note that the total surface-area is the same for different deployments.}
\vspace*{-10mm}
\end{center}
\end{figure*}

That is to say, in the far-field with the distributed deployment (b), the CRLBs for $x$ and $y$ dimensions are improved if the four small LISs are deployed sufficiently far apart in relation to radius $R$. Otherwise, the centralized deployment (a) provides lower CRLBs for $x$ and $y$ dimensions than that for the distributed deployment. However, the CRLB for $z$-dimension remains the same for both deployments. Further, when $R\!\ll\!D$, the Fisher-information in (\ref{Ixy4}) becomes
 \bea  \label{Ixy5} I_{x,y}\approx\frac{\pi^2D^2R^2}{2\lambda^2 z_0^4}, \eea
which shows that, the CRLBs for $x$ and $y$ dimensions are not only improved, but also decreases linearly in the surface-area of the LIS with a distributed deployment rather than quadratically.

Following the same principle, one can split the LIS into more small pieces and obtain an ultra-densely distributed deployment such as in (c) of Fig. \ref{fig4}. In general, with a distributed deployment, the overall positioning performance is more robust than a centralized deployment, and the average positioning performance is improved which we show later with numerical simulations.

\begin{figure}[t]
\begin{center}
\vspace*{-6mm}
\hspace*{-4mm}
\scalebox{0.42}{\includegraphics{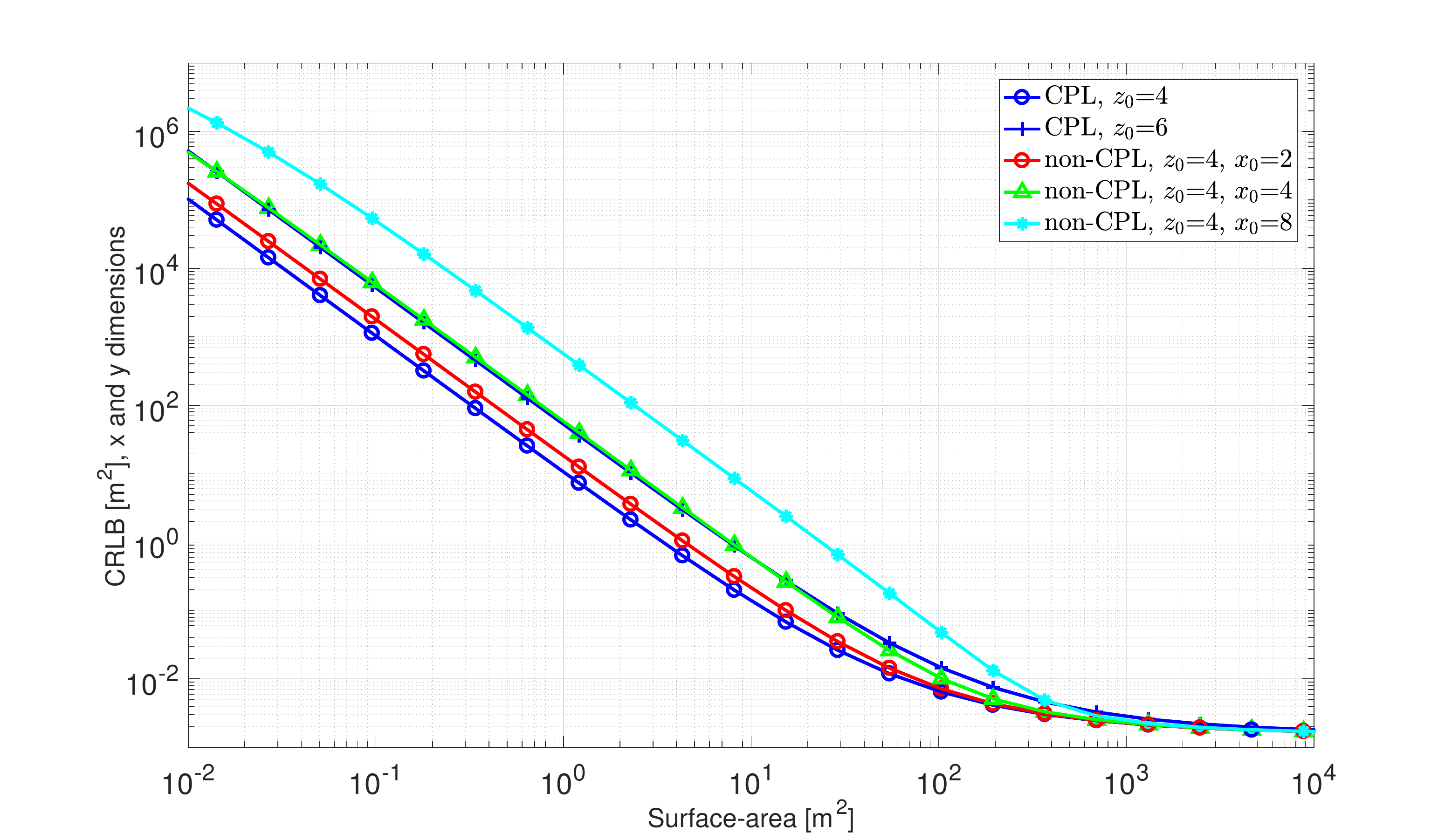}}
\vspace*{-8mm}
\caption{\label{fig6}CRLB for $x$ and $y$ dimensions, and the CRLBs for $y$-dimension are almost overlapped with those for $x$-dimension.}
\vspace*{-4mm}
\end{center}
\end{figure}

\begin{figure}
\begin{center}
\vspace*{-6mm}
\hspace*{-4mm}
\scalebox{0.42}{\includegraphics{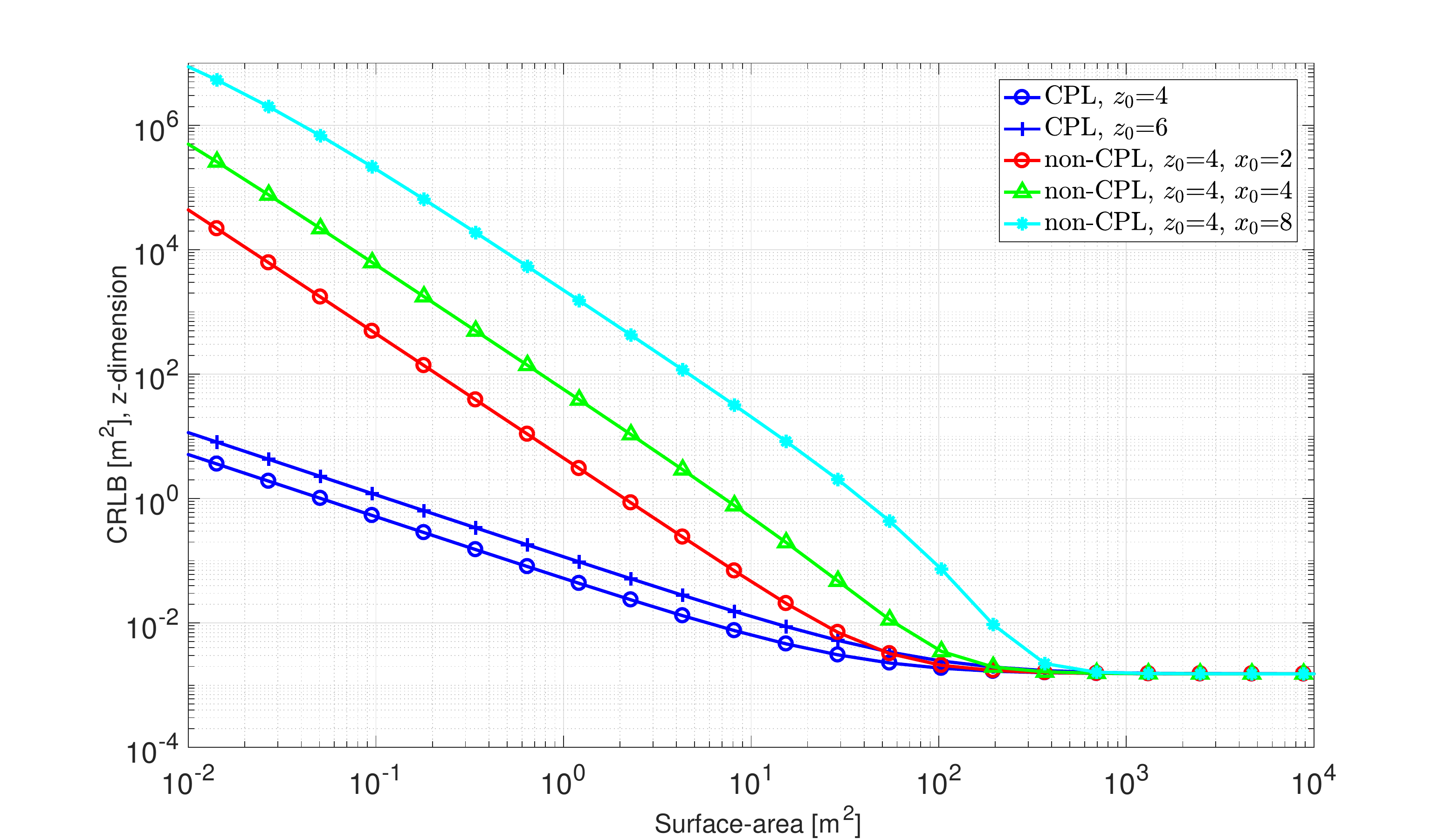}}
\vspace*{-8mm}
\caption{\label{fig7}CRLB for $z$-dimension with the same tests in Fig. \ref{fig6}.}
\vspace*{-12mm}
\end{center}
\end{figure}

\section{Numerical  Results}
In this section, numerical results are provided to illustrate the theories and conclusions that we have developed in previous sections. As explained earlier, in all tests we set the noise spectral-density to $N_0\!=\!2$, and without explicitly pointed out, the unit for the coordinates of the terminal, the wavelength $\lambda$ and the radius $R$ of the LIS are all in $m$, while the unit for CRLB is $m^2$.

\subsection{Exact-CRLB Evaluations}
We first evaluate the CRLB for terminals both on and away from the CPL as discussed in Sec. III and Sec. IV. As only the radius $\sqrt{x_0^2\!+\!y_0^2}$ matters as shown in Corollary 1, we illustrate with offsets only in $x$-dimension. In Fig. \ref{fig6} and Fig. \ref{fig7}, we test with $R\!=\!1$, $\lambda\!=\!0.1$, $y_0\!=\!0$, $x_0\!=\!2$, 4, 8, and $z_0\!=\!4$, 6, respectively, and some interesting results can be observed. 

Firstly, as shown in Fig. \ref{fig6}, when $\tau$ is small the CRLBs for $x$ and $y$ dimensions decrease quadratically in the surface-area of the LIS, while as shown in Fig. \ref{fig7}, the CRLB for $z$-dimension decreases only linearly in that. This is well aligned with the results in (\ref{area1}) and (\ref{area2}). Secondly, the CRLB for $z$-dimension increases dramatically when the terminal is away from the CPL. Furthermore, as long as $x_0 \!\neq \!0$, the CRLB for $z$-dimension also decreases quadratically in the surface-area. These phenomenons are well predicted by Property 2. Lastly, it can been seen that, as $R\!\to\!\infty$ the CRLB converges to a limit  $\frac{3\lambda^2}{2\pi^2}\!=\! 1.5\!\times\!10^{-3}$ for all dimensions as shown in (\ref{limtCRLB}).

\begin{figure}[t]
\begin{center}
\vspace*{-6mm}
\hspace*{-4mm}
\scalebox{0.42}{\includegraphics{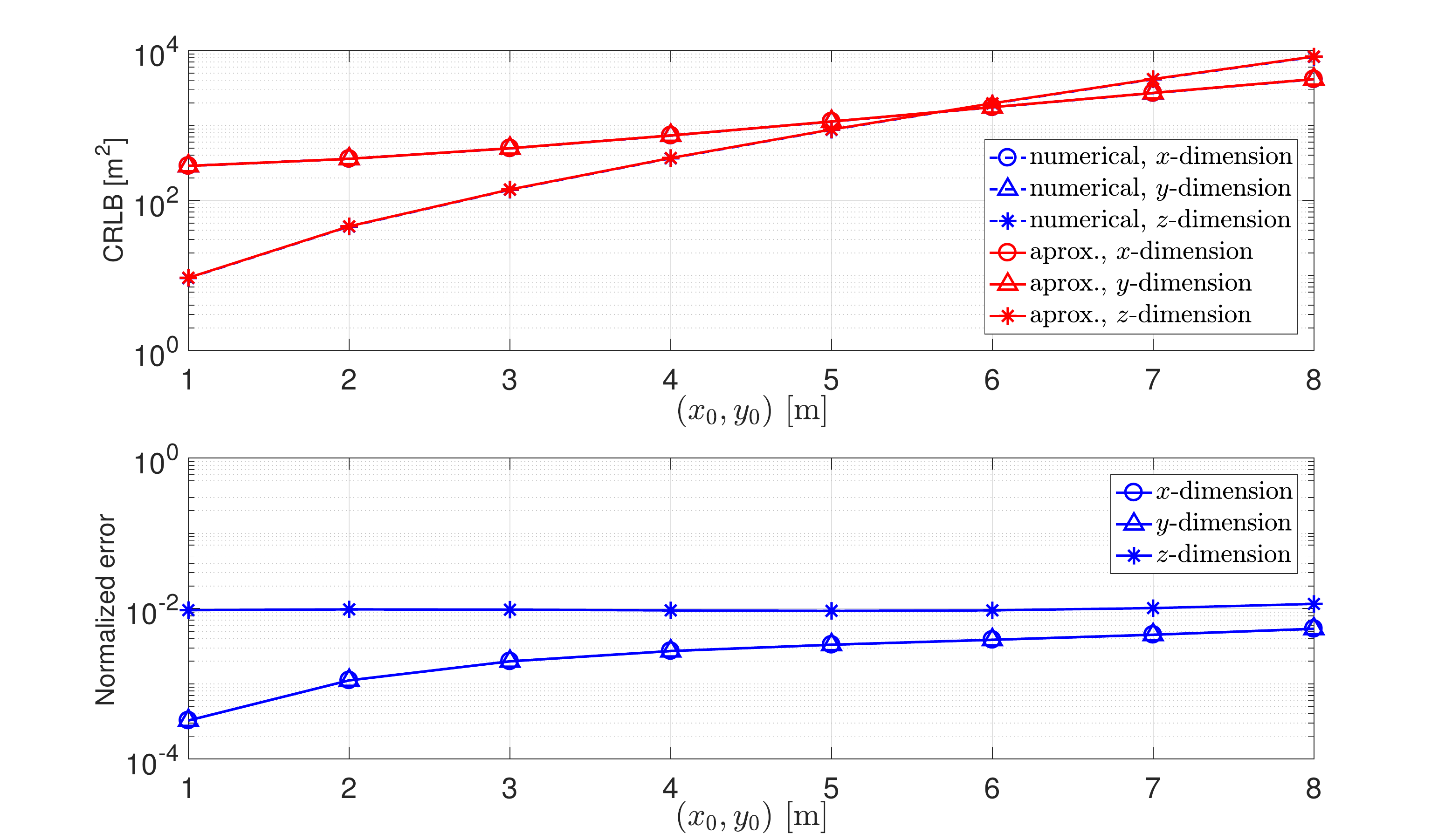}}
\vspace*{-8mm}
\caption{\label{fig8}CRLB computed with numerical integrations and their approximations using (\ref{app1})-(\ref{app2}) in Property 2, and the normalized approximation errors.}
\vspace*{-6mm}
\end{center}
\end{figure}

\begin{figure}
\begin{center}
\vspace*{-4mm}
\hspace*{-4mm}
\scalebox{0.42}{\includegraphics{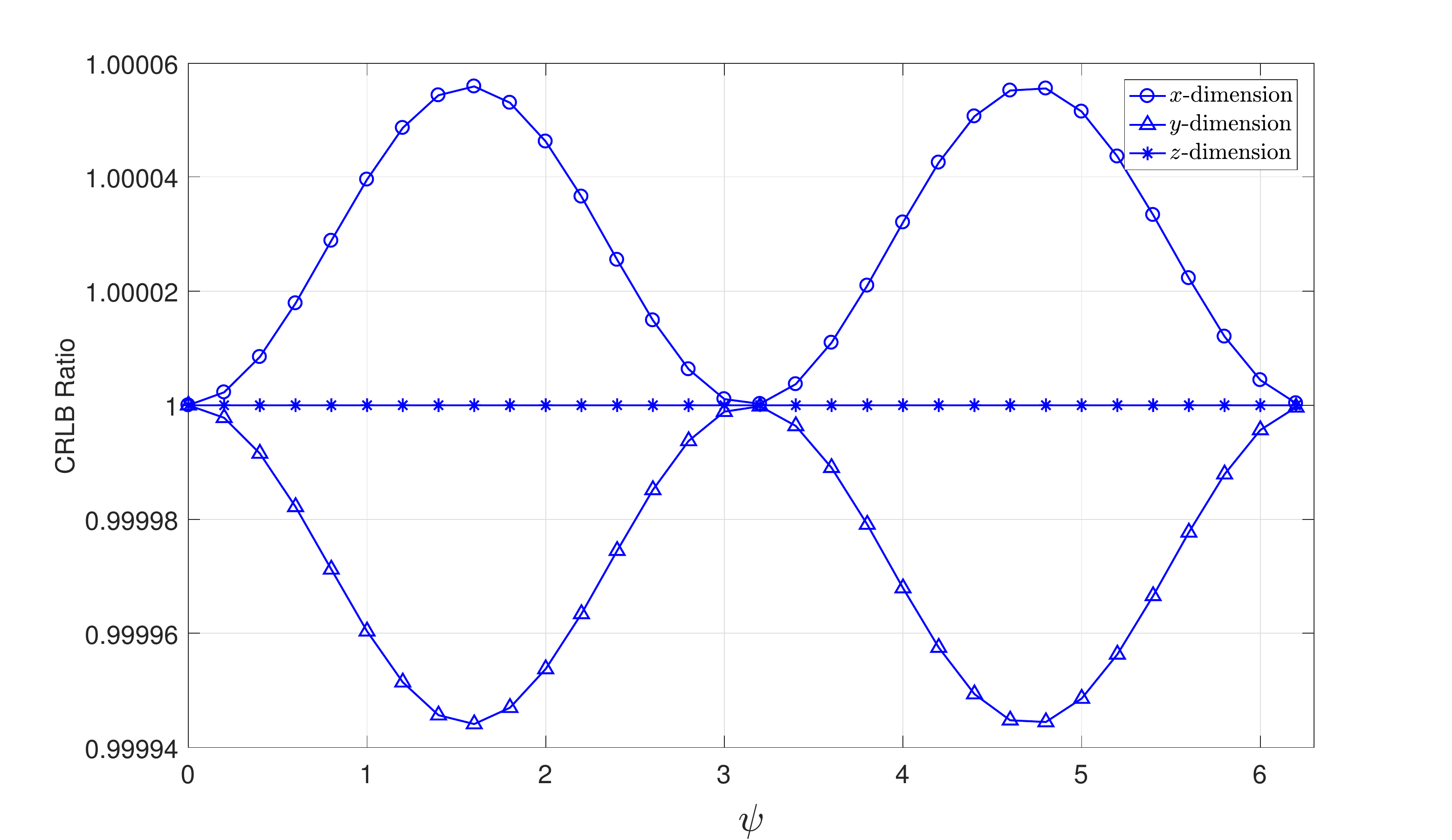}}
\vspace*{-8mm}
\caption{\label{fig9}The CRLB differences for terminals on a circle that is parallel to the LIS with center $(0, 0, 4)$ and radius $r\!=\!4$.}
\vspace*{-12mm}
\end{center}
\end{figure}

\subsection{CRLB-Approximation Accuracies}

Next, we evaluate the CRLB approximations for a terminal not on the CPL as discussed in Sec. IV. We compare the numerical integration results\footnote{For numerical computation of the CRLB, we use the Matlab built-in function \lq{}integral\rq{} to calculate the integrals in the CRLB matrix directly, which has an absolute error of $10^{-10}$ and relative error of $10^{-6}$.} of CRLB and their approximations using (\ref{app1})-(\ref{app2}) in Property 2. We test with $R\!=\!0.5$, $\lambda\!=\!0.1$, $z_0\!=\!8$, and $x_0\!=\!y_0$ varying from 1 to 8.

The CRLBs and the normalized approximation errors that are computed as the normalized CRLB differences between the numerical integrations and the approximations are both shown in Fig. \ref{fig8}. As can be seen, the approximations given by Property 2 perform well, with normalized errors less than 0.5\% for the $x$ and $y$ dimensions, and close to 1\% for $z$-dimension.

In Fig. \ref{fig9}, we repeat the tests in Fig. \ref{fig6} and Fig. \ref{fig7} with numerical integrations, but setting $x_0\!=\!r\cos\psi$ and $y_0\!=\!r\cos\psi$, with $r\!=\!4$ and $\psi$ changing over $[0, 2\pi]$. The CRLBs in all dimensions are normalized with those obtained at coordinates $(4, 0, 8)$. As can be seen, the CRLB for the $z$-dimension is identical for any angle $\psi$, while the CRLBs for $x$ and $y$ dimensions are almost identical. These observations corroborate Corollary 1.

\begin{figure}[t]
\begin{center}
\vspace*{-6mm}
\hspace*{-4mm}
\scalebox{0.42}{\includegraphics{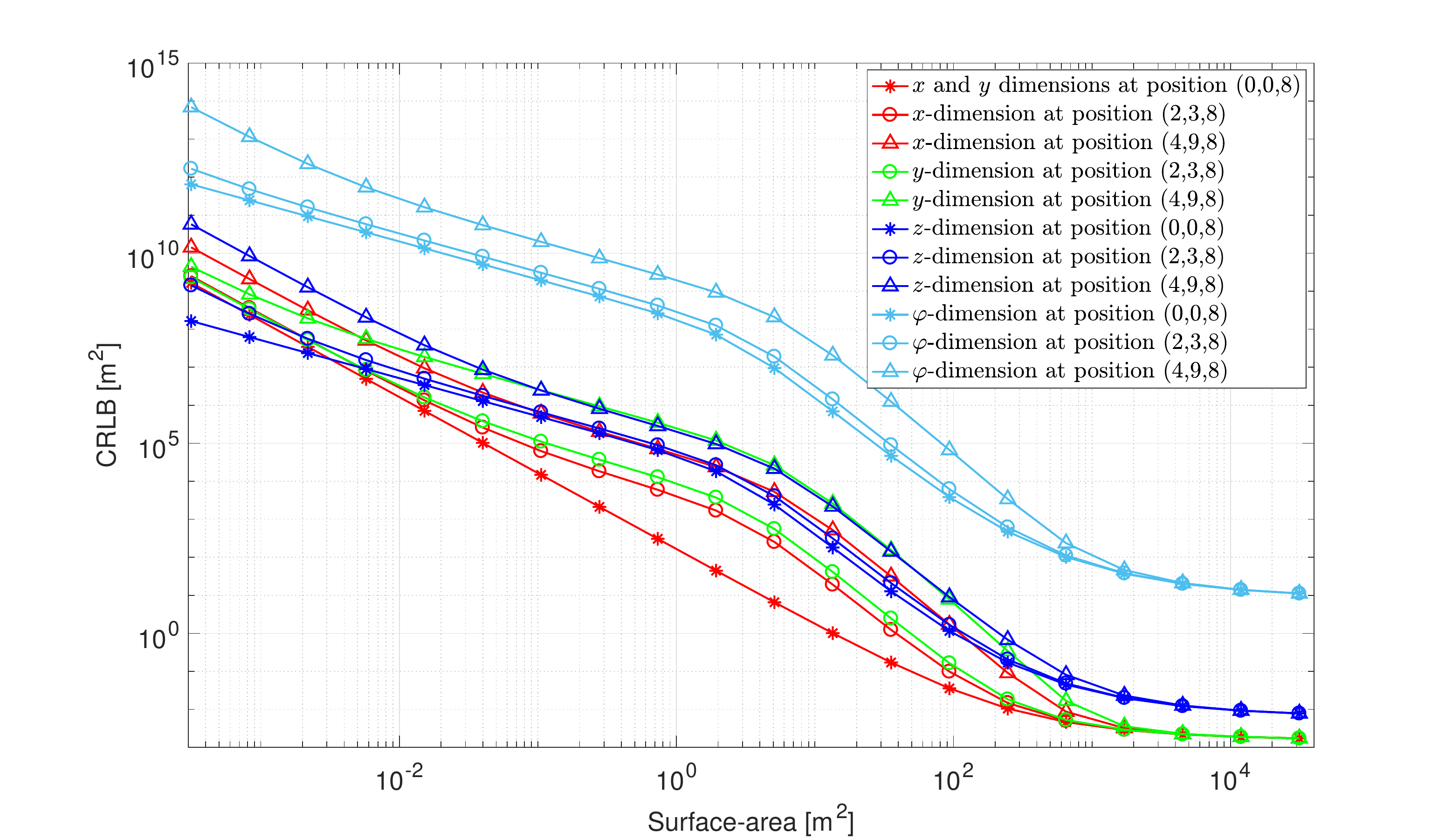}}
\vspace*{-8mm}
\caption{\label{fig10}The CRLB evaluated with unknown phase $\varphi$ for a terminal with different locations, both on and away from the CPL.}
\vspace*{-12mm}
\end{center}
\end{figure}

\subsection{CRLB with an Unknown Phase $\varphi$}
Next, we evaluate the CRLB for positioning with an unknown phase $\varphi$ presented as discussed in Sec. V. As can be seen in Fig. \ref{fig10}, when the terminal is away from the CPL, the CRLBs for all dimensions are increased and the curves have similar shapes. For all three Cartesian dimensions, the CRLB starts to decrease in the third-order of the surface-area when $R$ is larger than a certain threshold as explained in (\ref{Czpn3}). More interestingly, the CRLBs for $x$ and $y$ dimensions are lower than that for $z$-dimension when there is an unknown phase $\varphi$ present in the signal model. Furthermore, the behaviors of CRLB for a terminal not on the CPL is slightly different from the case located on the CPL. As can be seen, when $R$ is small, the CRLB decreases first quadratically in the surface-area instead of linearly, which is mainly because that the CRLB converges to the case with known $\varphi$, since the CRLB is so large that the impact of an unknown $\varphi$ is negligible. The CRLB for phase $\varphi$ is much higher than for the other Cartesian dimensions, and is around $\frac{4\pi^2}{\lambda^2}$ times of the CRLB for $z$-dimension as shown in (\ref{CzCwpn}), which basically means that the estimation of $\varphi$ is highly inaccurate unless at a very high SNR.

\subsection{CRLB with Centralized and Distributed Deployments of the LIS}
Finally, we evaluate the CRLB with the centralized and distributed deployments as discussed in Sec. VI. We set $W\!=\!H\!=\!4$ and $z_0\!=\!8$. All curves are obtained with numerical integrations without any approximations. We compare the CRLB with different deployments depicted in Fig. \ref{fig4}, that is, a single LIS, 4 small LISs, and 16 smaller LISs, with the same total surface-area. 

As shown in Fig. \ref{fig11} for a terminal on the CPL, when (\ref{thresh2}) is fulfilled, i.e., $R\!\leq\!\sqrt{\frac{W^2+H^2}{6}}\!=\!2.31$, the distributed deployments with 4 and 16 small LISs render lower CRLBs than the centralized deployment for $x$ and $y$ dimensions, while the CRLB for $z$-dimension remains the same. When $R$ increases beyond the threshold, the distributed deployments become worse for $x$ and $y$ dimensions, although the CRLB for $z$-dimension is slightly better.

\begin{figure}[t]
\begin{center}
\vspace*{-6mm}
\hspace*{-4mm}
\scalebox{0.42}{\includegraphics{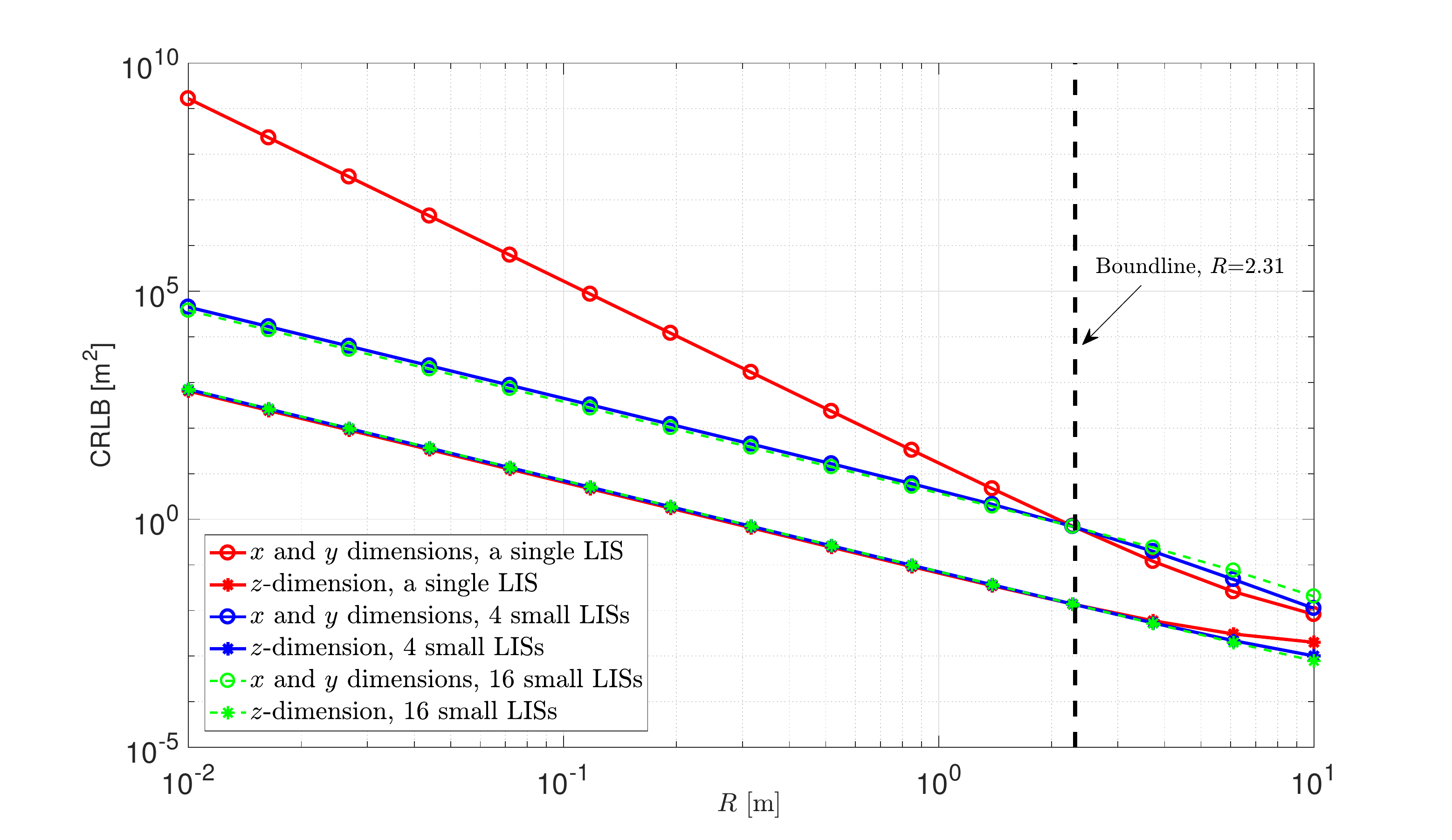}}
\vspace*{-8mm}
\caption{\label{fig11}The CRLB with different deployments of the LIS for a terminal on the CPL with $z_0\!=\!8$ and different radius $R$.}
\vspace*{-12mm}
\end{center}
\end{figure}

\begin{figure}[t]
\begin{center}
\vspace*{-6mm}
\hspace*{-4mm}
\scalebox{0.42}{\includegraphics{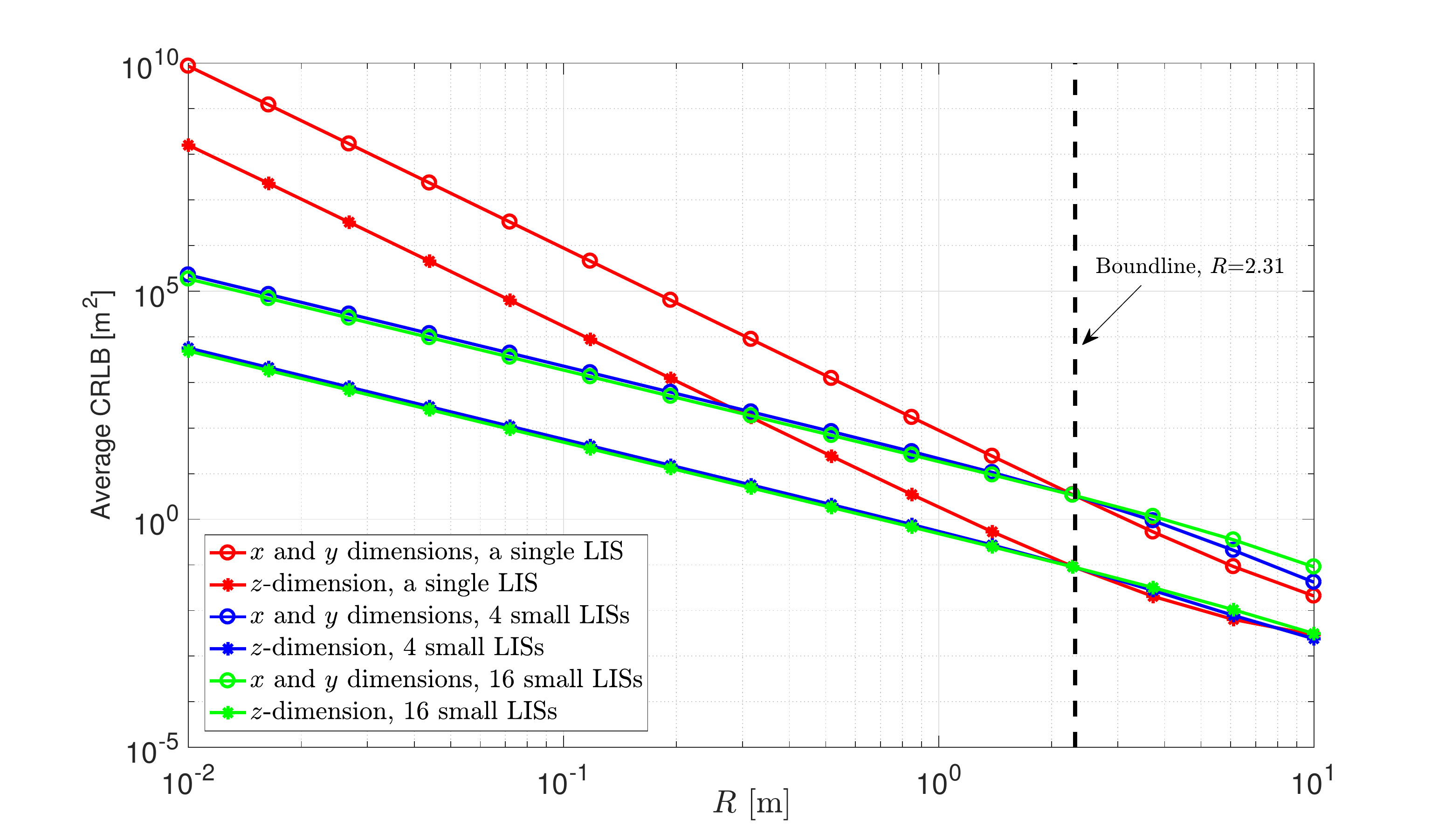}}
\vspace*{-8mm}
\caption{\label{fig12}The average CRLB with different deployments of the LIS for 1000 uniformly distributed terminal locations, and with $z_0\!=\!8$ and different radius $R$.}
\vspace*{-6mm}
\end{center}
\end{figure}

\begin{figure}
\begin{center}
\vspace*{-4mm}
\hspace*{-4mm}
\scalebox{0.42}{\includegraphics{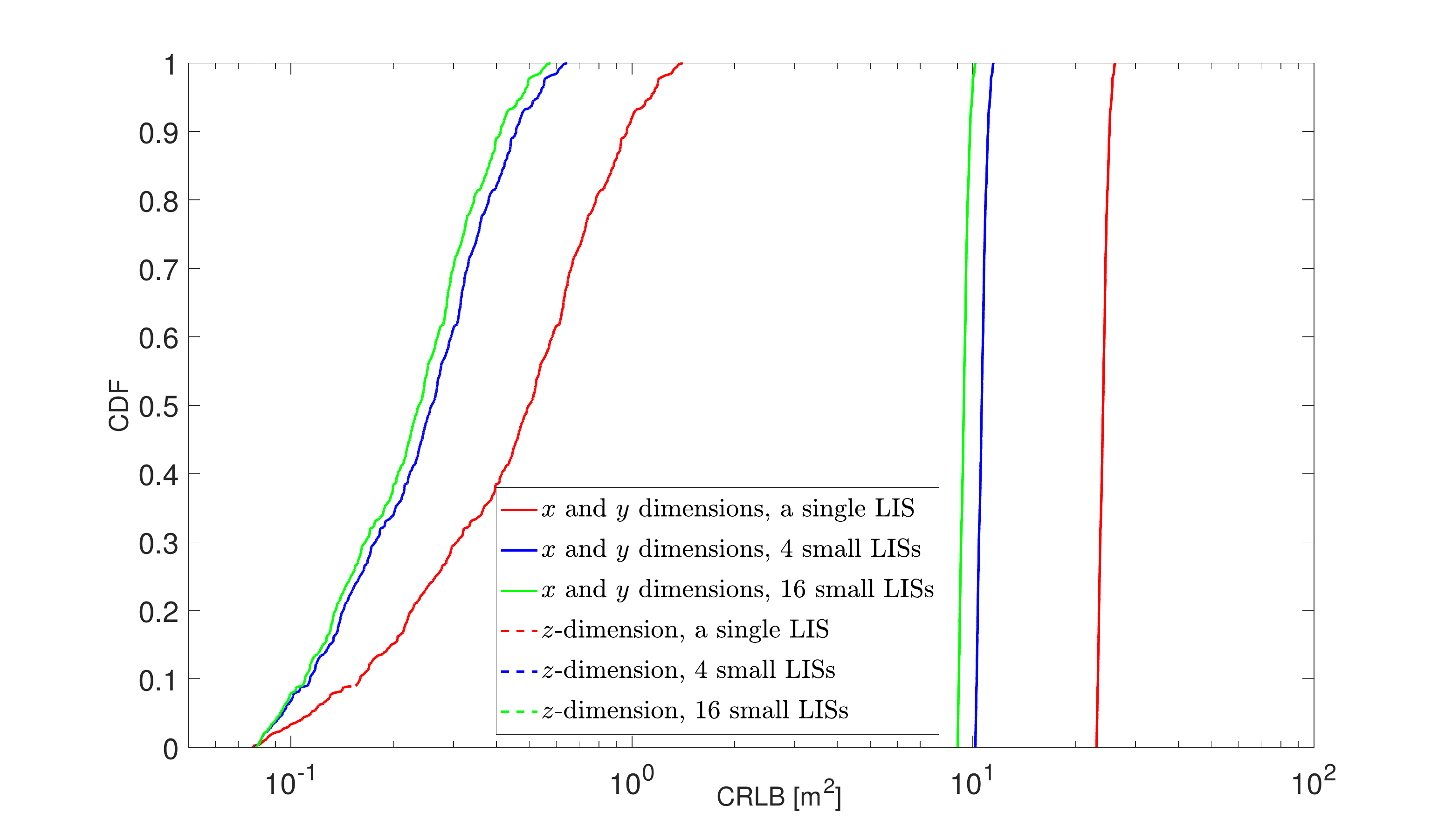}}
\vspace*{-8mm}
\caption{\label{fig13}The CDF of CRLB with different deployments of the LIS for 1000 uniformly distributed terminals with $R\!=\!1.39$.}
\vspace*{-12mm}
\end{center}
\end{figure}

In order to evaluate the average positioning performance, we draw 1000 terminals with coordinates of $x$ and $y$ dimensions uniformly distributed in [-2, 2], and $z_0\!=\!12$ for all terminals. In Fig. \ref{fig12} we plot the average CRLB for different dimensions. As can be seen, the average CRLB for all three dimensions are significantly improved with the distributed deployments. The average CRLB with 4 small LISs, each small LIS has a radius 0.005, can achieve the same average CRLB for a single LIS with $R\!=\!0.2$, that is, the surface-area needed for the distributed deployment is only 0.25\% of that for a centralized deployment when $R$ is small. As $R$ increases, different deployments converge to each other as expected. Further splitting the 4 small LISs into 16 smaller LISs provides marginal gains, but a likely cost of more stringent hardware requirements to achieve phase calibration and cooperation among the small LISs.

The cumulative distribution functions (CDF) of the CRLB are plotted in Fig. \ref{fig13}, where we can see that the CRLBs for all three Cartesian dimensions with a distributed deployment comprising 4 small LISs are significantly improved compared to a single centralized LIS. The CRLBs for $x$ and $y$ dimensions are relatively larger than that for $z$-dimension, however, the values of CRLB are also more concentrated than those for $z$-dimension. With 4 small LISs, the values of CRLB also become concentrated, which means that the overall positioning performance is improved with a distributed deployment of the LIS.

\section{Summary}
In this paper, we have derived the Fisher-information and Cram\'{e}r-Rao lower bounds (CRLBs) for positioning with large intelligent surfaces (LIS). For a terminal on the central perpendicular line (CPL), the CRLBs are derived in closed-form. For other positions we alternatively provide approximations in closed-form to compute the Fisher-information and CRLB which are shown to be accurate. We have also shown that, under mild conditions the CRLBs for $x$ and $y$ dimensions decrease quadratically in the surface-area of the deployed LIS. For $z$-dimension, the CRLB decreases linearly in the surface-area for a terminal on the CPL. When the terminal is away from the CPL, the CRLBs for all Cartesian dimensions increase dramatically and decrease quadratically in the surface-area of the LIS. 

Further, we have also analyzed the CRLBs for positioning in the presence of a random unknown phase $\varphi$ in the received signal model. We have shown that, the CRLBs are dramatically increased by the unknown phase, and in general the CRLBs for all dimensions decrease in the third-order of the surface-area, provided that the surface-area exceeds a certain threshold. We have also shown that, for an infinitely large LIS, the CRLB for $z$-dimension with an unknown phase is 6 dB higher than that with a known $\varphi$, and the CRLB for estimating $\varphi$ converges to a constant independent of the wavelength $\lambda$. 

Furthermore, we compare centralized and distributed deployments of the LIS and show that, the distributed deployments have the potential to extend the coverage of terminal-positioning and can provide better average CRLBs for all dimensions.

\section*{Appendix A: Proof of Theorem 1}
For a terminal on the CPL, we have $x_0\!=\!y_0\!=\!0$, and then the first-order derivatives with respect to $x$ and $y$ are equal to
{\setlength\arraycolsep{0pt}  \bea \label{dev11} \Delta s_1&=&\frac{\sqrt{z_0}x}{2\sqrt{\pi}}\!\left(\!
\frac{3}{2}\eta^{-\frac{7}{4}} \!+\! \frac{2\pi j }{\lambda}{\eta}^{-\frac{5}{4}}\!\right)\!\exp\left(\!-\frac{2\pi j\sqrt{\eta}}{\lambda} \right)\!,   \\
\label{dev22} \Delta s_2&=&\frac{\sqrt{z_0}y}{2\sqrt{\pi}}\left(
\!\frac{3}{2}\eta^{-\frac{7}{4}} \!+\! \frac{2\pi j }{\lambda}{\eta}^{-\frac{5}{4}}\!\right)\!\exp\left(\!-\frac{2\pi j\sqrt{\eta}}{\lambda}\right)\!,   \eea}
\hspace{-1.4mm}where $\eta=z_0^2+y^2+x^2$, and the first-order derivative with respect to $z$ is in (\ref{dev3}). Since $\eta$ is an even function with respect to $x$ and $y$, the cross-terms of different dimensions in the Fisher-information matrix vanish, and we obtain a diagonal Fisher-information matrix with diagonal elements being
\bea \label{appA1} I_{ii}=\iint_{x^2+y^2\leq R^2} |\Delta s_i|^2\mathrm{d}x\mathrm{d}y. \eea
Calculating (\ref{appA1}) directly yields
{\setlength\arraycolsep{2pt}\bea  \label{appA2} I_{11}&=&I_{22}=\frac{z_0}{4\pi}\left(\frac{9}{4}g_1(7)+ \frac{4\pi^2}{\lambda^2}g_1(5)\right),  \\
\label{appA3} I_{33}&=&\frac{z_0^3}{4\pi}\left(\frac{1}{4z_0^4}g_3(3)+\left(\frac{4\pi^2}{\lambda^2}- \frac{3}{2z_0^2}\right)g_3(5)+ \frac{9}{4}g_3(7)\right), \eea}
\hspace{-1.4mm}Utilizing the results in (\ref{g1n}) and (\ref{g2n}) and after some manipulations, the Fisher-information for different dimensions are then in (\ref{Ixy}) and (\ref{Iz}).

\section*{Appendix B: Proof of Property 2}
With conditions (\ref{cond1})-(\ref{cond2}),  it holds that
 \bea \label{eta1} \eta=z_0^2+(y-y_0)^2+(x-x_0)^2=z_1^2+x^2+y^2+o(z_1),  \eea
and the noiseless signal (\ref{md1}) can be written as
 \bea  \label{md3} \hat{s}_{x_0, y_0, z_0}(x,y)=\sqrt{\frac{z_0}{z_1}}s_{0, 0, z_1}(x,y)\left(1+\mathop{o}(1)\right). \eea
The parameters $\alpha$ and $\beta$ are defined in (\ref{alpha}) and (\ref{beta}). Note that with conditions in (\ref{cond1})-(\ref{cond2}), the derivatives in (\ref{dev1})-(\ref{dev3}) can be approximated as
{\setlength\arraycolsep{2pt}\bea  \Delta s_1&=& \frac{\sqrt{\pi z_0}\left(x -x_0\right) j }{\lambda}{\eta}^{-\frac{5}{4}}\!\exp\left(\!-\frac{2\pi j}{\lambda} \sqrt{\eta}\right)\left(1+\mathcal{O}\left(\frac{\lambda}{z_1}\right)\right)\!,\\
 \Delta s_2&=&\frac{\sqrt{\pi z_0}\left(y-y_0\right)  }{\lambda}{\eta}^{-\frac{5}{4}}\!\exp\left(\!-\frac{2\pi j}{\lambda} \sqrt{\eta}\right)\left(1+\mathcal{O}\left(\frac{\lambda}{z_1}\right)\right)\!,\\
 \Delta s_3&=&-\frac{\sqrt{\pi}z_0^{\frac{3}{2}} j }{\lambda}{\eta}^{\frac{5}{4}}\!\exp\left(\!-\frac{2\pi j}{\lambda} \sqrt{\eta}\right)\left(1+\mathcal{O}\left(\frac{\lambda}{z_1}\right)\right)\!.\eea}
\hspace{-1.4mm}First we consider the Fisher-information for the $z$-dimension. Since
\bea \label{dev44} \Delta s_3\left(\Delta s_3\right)^{\!\ast}=\frac{\pi z_0^3 }{\lambda^2 }{\eta}^{-\frac{5}{2}}\left(1+\mathcal{O}\left(\frac{\lambda^2}{z_1^2}\right)\right)\!,\eea
and utilizing (\ref{md3}), the Fisher-information for the $z$-dimension with coordinates $(x_0, y_0, z_0)$ can be approximated based on that with coordinates $(0, 0, z_0)$, which is
\bea  \label{I33apendB} I_{33}= \beta\left(1+o\left(\frac{\lambda}{z_1}\right)\right)\!.\eea
Similarly, as
\bea \label{dev33}  \Delta s_1\left( \Delta s_3\right)^{\!\ast}&=&\left(-\frac{\pi x z_0^2  }{\lambda^2}{\eta}^{-\frac{5}{2}}+\frac{\pi x_0 z_0^2  }{\lambda^2}{\eta}^{-\frac{5}{2}}\right)\left(1+\mathcal{O}\left(\frac{\lambda^2}{z_1^2}\right)\right)\!, \eea
and the integrals over the term $-\frac{\pi x z^2}{\lambda^2} {\eta}^{-\frac{5}{2}}$ (with respect to $x$ and $y$ in $\eta$) in $ \Delta s_1\left( \Delta s_3\right)^{\!\ast}$ is zero since it is an odd function in $x$, then by directly comparing the remaining term in (\ref{dev33}) to (\ref{dev44}), after integration over $x$ and $y$, it holds that
\bea \label{I13} I_{13}=\frac{x_0}{z_0}I_{33}\left(1+o\left(\frac{\lambda}{z_1}\right)\right)\!,\eea
Furthermore, it also hold that
\bea \label{appA2}  \Delta s_1\left( \Delta s_1\right)^{\!\ast}=\frac{\pi z_0(x^2-2xx_0+x_0^2)  }{\lambda^2 }{\eta}^{-\frac{5}{2}}\left(1+\mathcal{O}\left(\frac{\lambda^2}{z_1^2}\right)\right)\!.\eea
Using (\ref{md3}), the integrals of the first term $\frac{\pi z_0x^2 }{\lambda^2 }{\eta}^{-\frac{5}{2}}$ in (\ref{appA2}) can be approximated by $\alpha$, which is calculated based on the Fisher-information of the $x$-dimension with coordinates $(0, 0,z_1)$. Then, the second term $\frac{xx_0 }{\lambda^2 }{\eta}^{-\frac{5}{2}}$ is an odd function in $x$ and the integral of it is zero. At last, comparing the last term $\frac{\pi z_0x_0^2 }{\lambda^2 }{\eta}^{-\frac{5}{2}}$ to (\ref{dev44}) yields
\bea  I_{11}= \left(\alpha+\frac{x_0^2}{z_0^2}\beta\right)\left(1+o\left(\frac{\lambda}{z_1}\right)\right)\!.\eea

Using the symmetry between $x$ and $y$ dimensions, it can also be shown that
{\setlength\arraycolsep{2pt}\bea  I_{23}&=&\frac{y_0}{z_0} \beta\left(1+o\left(\frac{\lambda}{z_1}\right)\right)\!,  \\
 \label{I22appA} I_{22}&=&  \left(\alpha+\frac{y_0^2}{z_0^2}\beta\right)\left(1+o\left(\frac{\lambda}{z_1}\right)\right)\!.\eea}
\hspace{-1.4mm}Finally, as 
\bea \label{appA3}  \Delta s_1\left( \Delta s_2\right)^{\!\ast}=\frac{\pi z_0(xy-xx_0-yy_0+x_0y_0)  }{\lambda^2 }{\eta}^{-\frac{5}{2}}\left(1+\mathcal{O}\left(\frac{\lambda^2}{z_1^2}\right)\right)\!,\eea
the integrals of the first three terms in (\ref{appA3}) vanish as they are odd functions either in $x$ or $y$. Comparing the last term to (\ref{dev44}), it holds that 
\bea  \label{I12} I_{12}= \frac{x_0y_0}{z_0^2}\beta\left(1+o\left(\frac{\lambda}{z_1}\right)\right)\!.\eea
Noting that the Fisher-information matrix is symmetric, and combining (\ref{I13})-(\ref{I12}), the Fisher-information matrix is in (\ref{Imat}), which completes the proof.

\section*{Appendix C: Proof of Theorem 2}
First of all, since the unknown phase $\varphi$ only appears in the exponential terms of the first-order derivatives, it does not appear in the Fisher-information matrix, and the Fisher-information for $x$, $y$, and $z$ dimensions remain the same. 

Next, we compute the cross-terms between $\varphi$-dimension and the other dimensions. Following the similar arguments as in the proof of Property 2, it can be shown that,
{\setlength\arraycolsep{2pt} \bea \label{appD1}  I_{14}&=&\iint_{x^2+y^2\leq R}\Re\!\left\{\Delta s_1 \left(\Delta s_4\right)^{\ast}\right\}\mathrm{d}x\mathrm{d}y =\frac{x_0z_0}{2\lambda}g_3(4),  \\
\label{appD2} I_{24}&=&\iint_{x^2+y^2\leq R}\Re\!\left\{\Delta s_1 \left(\Delta s_4\right)^{\ast}\right\}\mathrm{d}x\mathrm{d}y =\frac{y_0z_0}{2\lambda}g_3(4),  \\
\label{appD3}  I_{34}&=&\iint_{x^2+y^2\leq R}\Re\!\left\{\Delta s_3 \left(\Delta s_4\right)^{\ast}\right\}\mathrm{d}x\mathrm{d}y =\frac{z_0^2}{2\lambda}g_3(4), \eea}
\hspace{-1.4mm}and
\bea \label{appD4}  I_{44}=\iint_{x^2+y^2\leq R}|\Delta s_4 |^2\mathrm{d}x\mathrm{d}y = \frac{z_0}{4\pi}g_3(3), \eea
where $g_3(n)$ is defined in (\ref{g3}). 

\section*{Appendix D: Proof of Property 4}
Inserting the expressions of $I_{33}$, $I_{34}$, $I_{44}$ given in (\ref{I33apendB}), (\ref{appD3}), (\ref{appD4}) to (\ref{Czpn}) and (\ref{Cwpn}), we have the CRLBs for $z$-dimension and phase $\varphi$ equal to
\bea\label{Czpn3}  C_z&=&\left(\frac{z_0^3}{4\pi}\left(\frac{1}{4z_0^4}g_3(3)+\left(\frac{4\pi^2}{\lambda^2}- \frac{3}{2z_0^2}\right)g_3(5)+ \frac{9}{4}g_3(7)\right) -\frac{\pi z_0^3}{\lambda^2}\frac{g_3^2(4)}{g_3(3)}\right)^{-1} ,\eea
and
\bea \label{Cwpn3} C_\varphi =\left(\frac{z_0}{4\pi}g_3(3) -\frac{\pi z_0}{\lambda^2}\frac{g_3^2(4)}{\left(\frac{1}{4z_0^4}g_3(3)+\left(\frac{4\pi^2}{\lambda^2}- \frac{3}{2z_0^2}\right)g_3(5)+ \frac{9}{4}g_3(7)\right)}\right)^{-1},\eea
respectively. For a terminal on the CPL, using the formula of $g_3(n)$ in (\ref{g2n}) yields
{\setlength\arraycolsep{2pt} \bea \label{g23}  g_3(3)&=&\frac{1}{z_0}-\frac{1}{R^2+z_0^2}, \\
\label{g24}g_3(4)&=&\frac{1}{2}\left(\frac{1}{z_0^2}-\frac{1}{\sqrt{R^2+z_0^2}} \right)\!, \\
\label{g25}g_3(5)&=&\frac{1}{3}\left(\frac{1}{z_0^3}-\frac{1}{(R^2+z_0^2)^{\frac{3}{2}}} \right)\!, \\
\label{g27} g_3(7)&=&\frac{1}{5}\left(\frac{1}{z_0^5}-\frac{1}{(R^2+z_0^2)^{\frac{5}{2}}} \right)\!. \eea}
\hspace{-1.4mm}Inserting (\ref{g23})-(\ref{g27}) back into (\ref{Czpn3}) and (\ref{Cwpn3}), after some manipulations, the CRLBs for $z$-dimension and phase $\varphi$ are in (\ref{Czpn1}) and (\ref{Cwpn1}), respectively.

\bibliographystyle{IEEEtran}

\end{document}